\newtheorem{theorem}{Theorem}[section]
\newtheorem{lemma}[theorem]{Lemma}
\newcommand{\Tr}{\text{Tr}}
\newcommand{\trg}{\text{trg}}
\title{A spin glass model for the loss surfaces of generative adversarial networks}
\author[1]{\large Nicholas P. Baskerville}
\author[2]{\large Jonathan P. Keating}
\author[1]{\large Francesco Mezzadri}
\author[1]{\large Joseph Najnudel}
\affil[1]{\small\textit{School of Mathematics, University of Bristol, Fry Building, Bristol, BS8 1UG, UK}}
\affil[2]{\small\textit{Mathematical Institute, University of Oxford, Oxford, OX2 6GG, UK}}
\affil[ ]{\texttt{\{n.p.baskerville, F.Mezzadri, joseph.najnudel\}@bristol.ac.uk}, \texttt{Jon.Keating@maths.ox.ac.uk}}
\newcommand{\wD}{w^{(D)}}
\newcommand{\vwD}{\vec{w}^{(D)}}
\newcommand{\wG}{w^{(G)}}
\newcommand{\vwG}{\vec{w}^{(G)}}
\newcommand{\lD}{\ell^{(D)}}
\newcommand{\lG}{\ell^{(G)}}
\newcommand{\LD}{L^{(D)}}
\newcommand{\LG}{L^{(G)}}
\newcommand{\expect}{\mathbb{E}}
\newcommand{\indic}{\mathbbm{1}}
\renewcommand{\vec}[1]{\bm{#1}}
\newcommand{\pD}{\partial^{(D)}}
\newcommand{\pG}{\partial^{(G)}}
\newcommand{\ensemble}{ensemble }
\begin{document}
\maketitle
\begin{abstract}
We present a novel mathematical model that seeks to capture the key design feature of generative adversarial networks (GANs). Our model consists of two interacting spin glasses, and we conduct an extensive theoretical analysis of the complexity of the model's critical points using techniques from Random Matrix Theory. The result is insights into the loss surfaces of large GANs that build upon prior insights for simpler networks, but also reveal new structure unique to this setting.

\end{abstract}

\section{Introduction}
By making various modeling assumptions about standard multi-layer perceptron neural networks, \cite{choromanska2015loss} argued heuristically that the training loss surfaces of large networks could be modelled by a spherical multi-spin glass. Using theoretical results of \cite{auffinger2013random}, they were able to arrive at quantitative asymptotic characterisations, in particular the existence of a favourable `banded structure' of local-optima of the loss. There are clear and acknowledged deficiencies with their assumptions \cite{choromanska2015open} and recent observations have shown that the Hessians of real-world deep neural networks do not behave like random matrices from the Gaussian Orthogonal Ensemble (GOE) of Random Matrix Theory at the macroscopic scale \cite{papyan2018full, granziol2019towards, granziol2020beyond}, despite this being implied by the spin-glass model of \cite{choromanska2015loss}. Moreover, there have been questions raised about whether the mean asymptotic properties of loss surfaces for deep neural networks (or energy surfaces of glassy objects) are even relevant practically for gradient-based optimisation in sub-exponential time \cite{baity2019comparing, mannelli2019passed, folena2019rethinking}, though interpretation of experiments with deep neural networks remains difficult and the discussion about the true shape of their loss surfaces and the implications thereof is far from settled. Nevertheless, spin-glass models  present a tractable example of high-dimensional complex random functions that may well provide insights into aspects of deep learning. Rather than trying to improve or reduce the assumptions of \cite{choromanska2015loss}, various authors have recently opted to skip the direct derivation from a neural network to a statistical physics model, instead proposing simple models designed to capture aspects of training dynamics and studying those directly. Examples include: the modified spin glass model of \cite{PhysRevX.9.011003} with some explicitly added `signal'; the simple explicitly non-linear model of \cite{maillard2019landscape}; the spiked tensor `signal-in-noise' model of \cite{mannelli2019afraid}. In a slightly different direction, \cite{baskerville2020loss} removed one of the main assumptions from the \cite{choromanska2015loss} derivation, and in so doing arrived at a deformed spin-glass model. All of this recent activity sits in the context of earlier work connecting spin-glass objects with simple neural networks \cite{kanter1987associative, gardner1988space, engel2001statistical} and, more generally, with image reconstruction and other signal processing problems \cite{nishimori2001statistical}.


One area that has not been much explored in the line of the above-mentioned literature is the study of architectural variants. Modern deep learning contains a very large variety of different design choices in network architecture, such as convolutional networks for image and text data (among others) \cite{goodfellow2016deep, conneau-etal-2017-deep}, recurrent networks for sequence data \cite{hochreiter1997long} and self-attention transformer networks for natural language \cite{devlin-etal-2019-bert, radford2018improving}. Given the ubiquity of convolutional networks, one might seek to study those, presumably requiring consideration of local correlations in data. One could imagine some study of architectural quirks such as residual connections \cite{he2016deep}, and batch-norm has been considered to some extent by \cite{pennington2017nonlinear}. In this work, we propose a novel model for \emph{generative adversarial networks} (GANs) \cite{NIPS2014_5423} as two interacting spherical spin glasses. GANs have been the focus of intense research and development in recent years, with a large number of variants being proposed \cite{radford2015unsupervised, zhang2018self, liu2016coupled,karras2020a, mirza2014conditional, arjovsky2017wasserstein, zhu2017unpaired} and rapid progress particularly in the field of image generation. From the perspective of optimisation, GANs have much in common with other deep neural networks, being complicated high-dimensional functions optimised using local gradient-based methods such as stochastic gradient descent and variants. On the other hand, the adversarial training objective of GANs, with two deep networks competing, is clearly an important distinguishing feature, and GANs are known to be more challenging to train than single deep networks. Our objective is to capture the essential adversarial aspect of GANs in a tractable model of high-dimensional random complexity which, though being a significant simplification, has established connections to neural networks and high dimensional statistics.

Our model is inspired by \cite{choromanska2015loss, ros2019complex, mannelli2019afraid, arous2019landscape} with spherical multi-spin glasses being used in place of deep neural networks. We thus provide a complicated, random, high-dimensional model with the essential feature of GANs clearly reflected in its construction. By employing standard Kac-Rice complexity calculations \cite{fyodorov2004complexity, fyodorov2007replica, auffinger2013random} we are able to reduce the loss landscape complexity calculation to a random matrix theoretic calculation. We then employ various Random Matrix Theory techniques as in \cite{baskerville2020loss} to obtain rigorous, explicit leading order asymptotic results. Our calculations rely on the supersymmetric method in Random Matrix Theory, in particular the approach to calculating limiting spectral densities follows \cite{Verbaarschot_2004} and the calculation also follows \cite{guhr1990isospin, guhr1991dyson} in important ways. The greater complexity of the random matrix spectra encountered present some challenges over previous such calculations, which we overcome with a combination of analytical and numerical approaches. Using our complexity results, we are able to draw qualitative implications about GAN loss surfaces analogous to those of \cite{choromanska2015loss} and also investigate the effect of a few key design parameters included in the GAN. We compare the effect of these parameters on our spin glass model and also on the results of experiments training real GANs. Our calculations include some novel details, in particular, we use precise sub-leading terms for a limiting spectral density obtained from supersymmetric methods to prove a required concentration result.

The role that statistical physics models such as spherical multi-spin glasses are to ultimately play in the theory of deep learning is not yet clear, with arguments both for and against their usefulness and applicability. We provide a first attempt to model an important architectural feature of modern deep neural networks within the framework of spin glass models and provide a detailed analysis of properties of the resulting loss (energy) surface. Our analysis reveals potential explanations for observed properties of GANs and demonstrates that it may be possible to inform practical hyperparameter choices using models such as ours. Much of the  advancement in practical deep learning has come from innovation in network architecture, so if deep learning theory based on simplified physics models like spin-glasses is to keep pace with practical advances in the field, then it will be necessary to account for architectural details within such models. Our work is a first step in that direction and the mathematical techniques used may prove more widely valuable.

The paper is structured as follows: in Section \ref{sec:model} we introduce the interacting spin glass model; in Section \ref{sec:kac} we use a Kac-Rice formula to derive random matrix expressions for the asymptotic complexity of our model; in Section \ref{sec:lsd} we derive the limiting spectral density of the relevant random matrix ensemble; in Section \ref{sec:complexity} we use the Coulomb gas approximation to compute the asymptotic complexity, and legitimise its use by proving a concentration result; in Section \ref{sec:implications} we derive some implications of our model for GAN training and compare to experimental results from real GANs; in Section \ref{sec:conclusions} we conclude. All code used for numerical calculations of our model, training real GANs, analysing the results and generating plots is made available\footnote{\url{https://github.com/npbaskerville/loss-surfaces-of-gans}}.

\section{An interacting spin glass model}\label{sec:model}
We use multi-spin glasses in high dimensions as a toy model for neural network loss surfaces without any further justification, beyond that found in \cite{choromanska2015loss, baskerville2020loss}. GANs are composed of two networks: \emph{generator} ($G$) and \emph{discriminator} ($D$). $G$ is a map $\mathbb{R}^m\rightarrow\mathbb{R}^d$ and $D$ is a map $\mathbb{R}^d\rightarrow\mathbb{R}$. $G$'s purpose is to generate synthetic data samples by transforming random input noise, while $D$'s is to distinguish between real data samples and those generated by $G$. Given some probability distribution $\mathbb{P}_{data}$ on some $\mathbb{R}^d$, GANs have the following minimax training objective \begin{align}\label{eq:gan}
    \min_{\Theta_G}\max_{\Theta_D}\left\{\mathbb{E}_{\vec{x}\sim \mathbb{P}_{data}} \log D(\vec{x}) + \mathbb{E}_{\vec{z}\sim \mathcal{N}(0, \sigma_z^2)}\log(1 - D(G(\vec{z})))\right\},
\end{align}
where $\Theta_D, \Theta_G$ are the parameters of the discriminator and generator respectively. With $\vec{z}\sim\mathcal{N}(0, \sigma_z^2)$, $G(\vec{z})$ has some probability distribution $\mathbb{P}_{gen}$. When successfully trained, the initially unstructured $\mathbb{P}_{gen}$ examples are easily distinguished by $D$, this in turn drives improvements in $G$, bring $\mathbb{P}_{gen}$ closer to $\mathbb{P}_{data}$. Ultimately, the process successfully terminates when $\mathbb{P}_{gen}$ is very close to $\mathbb{P}_{data}$ and $D$ performs little better than random at the distinguishing task. To construct our model, we introduce two spin glasses:
\begin{align}
    \lD(\vwD) &= \sum_{i_1,\ldots, i_p=1}^{N_D} X_{i_1,\ldots, i_p} \prod_{k=1}^p \wD_{i_k}\\
    \lG(\vwD, \vwG) &= \sum_{i_1,\ldots, i_p=1}^{N_D} \sum_{j_1,\ldots, j_q=1}^{N_G} Z_{i_1,\ldots, i_p, j_1,\ldots, j_q} \prod_{k=1}^p \wD_{i_k} \prod_{l=1}^q \wG_{i_l}\\  
\end{align}
where all the $X_{i_1,\ldots, i_p}$ are i.i.d. $\mathcal{N}(0,1)$ and $Z_{j_1,\ldots, j_q}$ are similarly i.i.d. $\mathcal{N}(0,1)$. We then define the generator and discriminator spin glasses: \begin{align}
    \LD(\vwD, \vwG) &= \lD(\vwD) - \sigma_z\lG(\vwD, \vwG),\label{eq:ld_def}\\
    \LG(\vwD, \vwG) &= \sigma_z \lG(\vwD,\vwG).\label{eq:lg_def}
\end{align}

$\lD$ plays the role of the loss of the discriminator network when trying to classify genuine examples as such. $\lG$ plays the role of loss of the discriminator when applied to samples produced by the generator, hence the sign difference between $\LD$ and $\LG$. $\vwD$ are the weights of the discriminator, and $\vwG$ the weights of the generator. The $X_{\vec{i}}$ are surrogates for the training data (i.e. samples from $\mathbb{P}_{data}$) and the $Z_{\vec{j}}$ are surrogates for the noise distribution of the generator. For convenience, we have chosen to pull the $\sigma_z$ scale outside of the $Z_{\vec{j}}$ and include it as a constant multiplier in (\ref{eq:ld_def})-(\ref{eq:lg_def}). In reality, we should like to keep $Z_{\vec{j}}$ as i.i.d. $\mathcal{N}(0,1)$ but take $X_{\vec{i}}$ to have some other more interesting distribution, e.g. normally or uniformly distributed on some manifold. Using $[x]$ to denote the integer part of $x$, we take $N_D = [\kappa N], N_G = [\kappa' N]$ for fixed $\kappa\in(0,1)$, $\kappa'=1-\kappa$, and study the regime $N\rightarrow\infty$. Note that there is no need to distinguish between $[\kappa N]$ and $\kappa N$ in the $N\rightarrow\infty$ limit.

\section{Kac-Rice formulae for complexity}\label{sec:kac}
Training GANs involves jointly minimising the losses of the discriminator and the generator. Therefore, rather than being interested simply in upper-bounding a single spin-glass and counting its stationary points, the complexity of interest comes from jointly upper bounding both $L^{(D)}$ and $L^{(G)}$ and counting points where both are stationary. Using $S^{M}$ to denote the $M$-sphere\footnote{We use the convention of the $M$-sphere being the sphere embedded in $\mathbb{R}^M$.}, we define the complexity

\begin{align}
    C_{N} =\Bigg|\left\{ \vwD\in S^{N_D  }, \vwG\in S^{N_G } ~:~ \nabla_D \LD = 0, \nabla_G \LG = 0, \LD\in B_D, \LG\in B_G\right\}\Bigg|
\end{align}
for some Borel sets $B_D, B_G\subset\mathbb{R}$ and where $\nabla_D, \nabla_G$ denote the conformal derivatives with respect to the discriminator and generator weights respectively.
Note:
\begin{enumerate}
    \item We have chosen to treat the parameters of each network as somewhat separate by placing them on their own hyper-spheres. This reflects the minimax nature of GAN training, where there really are 2 networks being optimised in an adversarial manner rather than one network with some peculiar structure.
    \item We could have taken $\nabla = (\nabla_D, \nabla_G)$ and required $\nabla \LD = \nabla \LG = 0$ but, as in the previous comment, our choice is more in keeping with the adversarial set-up, with each network seeking to optimize separately its own parameters in spite of the other.
    \item We will only be interested in the case $ B_D = (-\infty, \sqrtsign{N} u_D)$ and $B_G= (-\infty, \sqrtsign{N} u_G)$, for $u_D, u_G\in \mathbb{R}$.
\end{enumerate}

So that the finer structure of local minima and saddle points can be probed, we also define the corresponding complexity with Hessian index prescription
\begin{align}
    C_{N, k_D, k_G} =\Bigg|\Bigg\{ \vwD\in S^{N_D  }, \vwG\in S^{N_G } ~:~ &\nabla_D \LD = 0, \nabla_G \LG = 0, \LD\in B_D, \LG\in B_G\notag\\
    &i(\nabla_D^2 L^{(D)}) = k_D, ~i(\nabla_G^2 L^{(G)}) = k_G \Bigg\}\Bigg|,\label{eq:C_Nkk_def}
\end{align}
where $i(M)$ is the index of $M$ (i.e. the number of negative eigenvalues of $M$). To calculate the complexities, we follow the well-trodden route of Kac-Rice formulae as pioneered by \cite{fyodorov2004complexity, fyodorov2007replica}. For a fully rigorous treatment, we proceed as in \cite{auffinger2013random, baskerville2020loss} by turning to the following result from \cite{adler2009random}.

\begin{theorem}[\cite{adler2009random} Theorem 12.1.1]\label{thm:adler_kac_rice}
Let $\mathcal{M}$ be a compact , oriented, N-dimensional $C^1$ manifold with a $C^1$ Riemannian metric $g$. Let $\phi:\mathcal{M}\rightarrow\mathbb{R}^N$ and $\psi:\mathcal{M}\rightarrow \mathbb{R}^K$ be random fields on $\mathcal{M}$. For an open set $A\subset\mathbb{R}^K$ for which $\partial A$ has dimension $K-1$ and a point $\vec{u}\in\mathbb{R}^{N}$ let \begin{equation}
    N_{\vec{u}} \equiv \left|\{x\in\mathcal{M} ~|~ \phi(x) = \vec{u}, ~ \psi(x)\in A\}\right|.
\end{equation}

Assume that the following conditions are satisfied for some orthonormal frame field E:
\begin{enumerate}[label=(\alph*)]
\item
All components of $\phi$, $\nabla_E \phi$, and $\psi$ are a.s. continuous and have finite variances (over $\mathcal{M}$).
\item
 For all $x\in\mathcal{M}$, the marginal densities $p_{x}$  of $\phi(x)$ (implicitly assumed to exist) are continuous at  $\vec{u}$.
 \item
 The conditional densities $p_{x}(\cdot|\nabla_E\phi(x),\psi(x))$ of $\phi(x)$ given $\psi(x)$ and $\nabla_E\phi(x)$ (implicitly assumed to exist) are bounded above and continuous at $\vec{u}$, uniformly in $\mathcal{M}$.
 \item
 The conditional densities $p_x (\cdot|\phi(x) = \vec{z})$ of $\det(\nabla_{E_j}\phi^i  (x))$ given are continuous in a neighbourhood of $0$ for $\vec{z}$ in a neighbourhood of $\vec{u}$  uniformly in $\mathcal{M}$.
 \item
 The conditional densities $p_x (\cdot|\phi (x) = \vec{z})$ are continuous for $\vec{z}$ in a neighbourhood of $\vec{u}$ uniformly in $\mathcal{M}$.
 \item
 The following moment condition holds \begin{equation}
     \sup_{x\in\mathcal{M}}\max_{1\leq i,j\leq N}\expect\left\{\left|\nabla_{E_j}\phi^i(x)\right|^N\right\}< \infty
 \end{equation}
 \item
 The moduli of continuity with respect to the (canonical) metric induced  by $g$ of each component of $\psi$, each component of $\phi$ and each $\nabla_{E_j}\phi^i$ all satisfy, for any $\epsilon > 0$ \begin{equation}\label{eq:moduli_condition}
    \mathbb{P}( \omega(\eta) >\epsilon) = o(\eta^N), ~~ \text{as } \eta\downarrow 0
 \end{equation}
 where the \emph{modulus of continuity} of a real-valued function $G$ on a metric space $(T, \tau)$ is defined as (c.f. \cite{adler2009random} around (1.3.6)) \begin{equation}
     \omega(\eta) \coloneqq \sup_{s,t : \tau(s,t)\leq\eta}\left|G(s) - G(t)\right|
 \end{equation} 
\end{enumerate}
Then \begin{equation}\label{eq:adler_taylor_kac_rice}
    \expect N_{\vec{u}} = \int_{\mathcal{M}}\expect \left\{|\det \nabla_E\phi(x)|\indic\{\psi(x)\in A\} ~| ~ \phi(x) = \vec{u}\right\}p_x(\vec{u}) \text{Vol}_g(x)
\end{equation}
where $p_x$ is the density of $\phi$ and $\text{Vol}_g$ is the volume element induced by $g$ on $\mathcal{M}$.
\end{theorem}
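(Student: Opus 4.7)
The plan is to establish the Kac-Rice formula via the classical regularisation scheme: express the (random) counting variable $N_{\vec{u}}$ as the $\epsilon\downarrow 0$ limit of integrals against an approximate $\delta$-function weighted by the Jacobian, then interchange expectation and limit. By compactness of $\mathcal{M}$, I would first reduce to a local problem using a finite atlas trivialising the frame field $E$ together with a smooth partition of unity; this lets me work chart-by-chart with $\phi$ viewed as a map between open subsets of $\mathbb{R}^N$, with $\nabla_E$ becoming the ordinary gradient up to smooth factors absorbed into the volume form.

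Concretely, fix an approximate identity $\delta_\epsilon:\mathbb{R}^N\to\mathbb{R}_{\geq 0}$ supported in the $\epsilon$-ball and set
\begin{equation*}
    N_{\vec{u}}^{\epsilon} := \int_{\mathcal{M}} \delta_\epsilon(\phi(x)-\vec{u})\,|\det\nabla_E\phi(x)|\,\indic\{\psi(x)\in A\}\,\text{Vol}_g(dx).
\end{equation*}
A Bulinskaya-type argument using the non-degeneracy built into condition~(d) shows that almost surely $\vec{u}$ is a regular value of $\phi$, so $\phi^{-1}(\vec{u})$ is a finite set of points at which $\nabla_E\phi$ is invertible; the ordinary change-of-variables theorem then yields $N_{\vec{u}}^{\epsilon}\to N_{\vec{u}}$ pathwise, the only subtlety being the exclusion of the null event that some preimage has $\psi(x)\in\partial A$, which is handled by the codimension-$1$ assumption on $\partial A$ together with continuity of $\psi$. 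Applying Fubini and conditioning on $\phi(x)=\vec{z}$ gives
\begin{equation*}
    \mathbb{E}\, N_{\vec{u}}^{\epsilon} = \int_{\mathcal{M}}\int_{\mathbb{R}^N}\delta_\epsilon(\vec{z}-\vec{u})\,p_x(\vec{z})\,\mathbb{E}\!\left[|\det\nabla_E\phi(x)|\,\indic\{\psi(x)\in A\}\,\Big|\,\phi(x)=\vec{z}\right]d\vec{z}\,\text{Vol}_g(dx),
\end{equation*}
and the continuity of $p_x$ (condition~(b)) together with the uniform-in-$\mathcal{M}$ continuity of the conditional densities and expectations (conditions~(c)--(e)) drives the inner integral, as $\epsilon\downarrow 0$, to precisely the integrand on the right of \eqref{eq:adler_taylor_kac_rice}.

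The heart of the proof, and the step I expect to be the main obstacle, is justifying the interchange $\mathbb{E}\lim = \lim\mathbb{E}$. The danger is that on exceptional sample paths $\phi$ might oscillate so wildly on small balls that $N_{\vec{u}}^{\epsilon}$ drastically overcounts $N_{\vec{u}}$, and that $|\det\nabla_E\phi|$ might fail to be uniformly integrable along the regularisation. This is precisely what conditions (f) and (g) are calibrated to rule out: the moment bound (f) furnishes uniform integrability of the Jacobian factor, while the modulus-of-continuity estimate (g), with its sharp $o(\eta^N)$ rate matched to the $\eta^N$ volume scaling of balls in $\mathcal{M}$, ensures that the contribution from sample paths on which $\phi$, $\nabla_E\phi$, or $\psi$ varies by more than $\epsilon$ within a ball of radius comparable to $\epsilon$ is negligible when integrated over $\mathcal{M}$. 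Combined with dominated convergence and the uniform continuity of the conditional expectation in $x$, this closes the argument and yields \eqref{eq:adler_taylor_kac_rice}.
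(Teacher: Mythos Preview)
The paper does not supply its own proof of this statement: Theorem~\ref{thm:adler_kac_rice} is quoted verbatim from \cite{adler2009random} (their Theorem~12.1.1) and invoked as a black box to derive the subsequent Lemma. There is therefore nothing in the paper to compare your proposal against.

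That said, your sketch is a faithful outline of the classical argument as it appears in Adler--Taylor: mollified delta-function regularisation, a Bulinskaya-type a.s.\ regularity statement, local change of variables, and then the interchange of limit and expectation controlled by the moment bound~(f) and the modulus-of-continuity estimate~(g). If your intent was to reproduce the proof from the cited source rather than from the present paper, your plan is on the right track; the details of the uniform-integrability step are genuinely delicate and occupy most of the effort in the original, so a fuller write-up would need to make precise the covering argument that pairs the $o(\eta^N)$ rate in~(g) with the volume growth of $\eta$-balls in $\mathcal{M}$.
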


In the notation of Theorem \ref{thm:adler_kac_rice}, we make the following choices: \begin{align*}
    \phi = \left(\begin{array}{c}
         \nabla_D \LD \\
         \nabla_G \LG
    \end{array}\right), ~~~  \psi = \left(\begin{array}{c}
          \LD \\
         \LG
    \end{array}\right)
\end{align*}
and so \begin{align*}
    A = B_D\times B_G, ~~~ \vec{u} = 0.
\end{align*}
and the manifold $\mathcal{M}$ is taken to be $S^{N_D}\times S^{N_G}$ with the product topology.\\

\begin{lemma}
\begin{align}
    C_N = \int_{S^{N_D}\times S^{N_G}}d\vwG d\vwD ~~\varphi_{(\nabla_D \LD,
         \nabla_G \LG)}(0)\expect&\left[  |\det \left(\begin{array}{cc} \nabla_D^2 \LD & \nabla_{GD} \LD \\ \nabla_{DG}\LG & \nabla^2_{G} \LG\end{array}\right)|  ~\Bigg|~ \nabla_G\LG =0, \nabla_D\LD = 0\right]\notag\\
         &\mathbbm{1}\left\{\LD\in B_D, \LG\in B_G\right\}\label{eq:kac_rice_first_old}
\end{align}
and therefore
\begin{align}
    C_N = \int_{S^{N_D}\times S^{N_G}}&d\vwG d\vwD ~~\varphi_{(\nabla_D \LD,
         \nabla_G \LG)}(0)\int_{B_D}dx_D \int_{B_G} dx_G ~ \varphi_{\LD}(x_D)\varphi_{\LG}(x_G) \notag\\ &\expect\Bigg[  |\det \left(\begin{array}{cc} \nabla_D^2 \LD & \nabla_{GD} \LD \\ \nabla_{DG}\LG  & \nabla^2_{G} \LG\end{array}\right)|  ~\Bigg|~ \nabla_G\LG =0, \nabla_D\LD = 0, 
         \LD = x_D, \LG = x_G\Bigg].\label{eq:kac_rice_first}
\end{align}

where $\varphi_{(\nabla_D \LD,\nabla_G \LG)}$ is the joint density of $(\nabla_D \LD,\nabla_G \LG)^T$, $\varphi_{\LD}$ the density of $\LD$, and $\varphi_{\LG}$ the density of $\LG$, all implicitly evaluated at $(\vwG, \vwD)$.
\end{lemma}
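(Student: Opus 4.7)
The plan is to invoke Theorem \ref{thm:adler_kac_rice} with the specific choices of $\phi$, $\psi$, $A$, $\vec{u}$ and $\mathcal{M}=S^{N_D}\times S^{N_G}$ described in the paragraph immediately preceding the lemma. With these choices the counting variable $N_{\vec{u}}$ of the theorem coincides with $C_N$, and formula \eqref{eq:adler_taylor_kac_rice} returns an integral over $\mathcal{M}$ of a conditional expectation weighted by the joint density of the two gradients at $\vec{u}=0$. Choosing an orthonormal frame adapted to the product structure of $\mathcal{M}$, the matrix $\nabla_E\phi$ decomposes into the block Hessian matrix displayed in the lemma, with the diagonal blocks $\nabla_D^2\LD$, $\nabla_G^2\LG$ coming from the two factors and the off-diagonal blocks $\nabla_{DG}\LG$, $\nabla_{GD}\LD$ coming from the cross-differentiation. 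This identification yields \eqref{eq:kac_rice_first_old} directly.

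The first substantive task is to verify the hypotheses (a)--(g). All components of $\phi$ and $\psi$ are smooth centred Gaussian random fields (finite-degree polynomials in the ambient coordinates of the product sphere), so (a), (b) and (f) are immediate, and (c)--(e) reduce to joint Gaussianity together with non-degeneracy of the covariance matrix of the relevant sub-vector. Non-degeneracy can be checked directly from the explicit covariance kernels of $\lD$ and $\lG$, and is uniform in $(\vwD,\vwG)\in\mathcal{M}$ by the transitive action of the isometry group on each sphere. The modulus-of-continuity bound (g) is the only step that requires real effort; it follows by applying the Borell-TIS inequality to each smooth Gaussian component (and each of its partial derivatives) on $\mathcal{M}$, exactly as in \cite{auffinger2013random, baskerville2020loss} for a single spherical spin glass, with the straightforward modification that one uses the product metric on $S^{N_D}\times S^{N_G}$ and exploits the fact that the $X$ and $Z$ tensors are independent.

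To pass from \eqref{eq:kac_rice_first_old} to \eqref{eq:kac_rice_first}, I would disintegrate the conditional expectation over the values $(\LD,\LG)\in B_D\times B_G$, writing it as an integral of the conditional density $\varphi_{(\LD,\LG)\mid \nabla=0}(x_D,x_G)$ against the further-conditioned $\expect[\,|\det|\,\mid \nabla=0,\,\LD=x_D,\,\LG=x_G\,]$. The key structural fact here is that on a sphere the value of a homogeneous spherical spin glass at a point is independent of its tangential (conformal) gradient at the same point, because the covariance between the value and the ambient gradient is proportional to the position vector and therefore orthogonal to the tangent space. Applying this observation separately on each factor of $\mathcal{M}$, together with the independence of the underlying $X$- and $Z$-tensors, shows that $(\LD,\LG)$ is independent of $(\nabla_D\LD,\nabla_G\LG)$ at any fixed $(\vwD,\vwG)$, so the conditional density on the right-hand side reduces to the unconditional joint density of $(\LD,\LG)$, and a further direct covariance computation produces the displayed product form $\varphi_{\LD}(x_D)\varphi_{\LG}(x_G)$.

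The principal technical obstacle is the verification of condition (g): it is the only hypothesis of Theorem \ref{thm:adler_kac_rice} that does not follow mechanically from joint Gaussianity and continuity of the covariance kernels. Once it is in place, the remainder of the proof is essentially an algebraic manipulation of multivariate Gaussian densities combined with the value/tangent-gradient independence specific to the spherical spin-glass setting.
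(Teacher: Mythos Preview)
Your approach is essentially the paper's own: verify hypotheses (a)--(g) of Theorem~\ref{thm:adler_kac_rice} (the first six via Gaussianity and smoothness, and (g) by reducing to the single--spin-glass case on each sphere factor and invoking the product topology), then pass to the disintegrated form by conditional-expectation manipulations. The paper's proof is considerably terser than yours, but the logical skeleton is identical.

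One caution on your final step. You assert that ``a further direct covariance computation produces the displayed product form $\varphi_{\LD}(x_D)\varphi_{\LG}(x_G)$.'' This is not what a direct computation gives: since $\LD=\lD-\sigma_z\lG$ and $\LG=\sigma_z\lG$ share the $\lG$ term, one has $\operatorname{Cov}(\LD,\LG)=-\sigma_z^2\,2^{p+q}\neq 0$, so $\LD$ and $\LG$ are \emph{not} independent and their joint density does not factor. The paper's own proof does not address this point either (it simply says the second display ``follows simply, using the rules of conditional expectation''), and in the subsequent calculations the paper in fact conditions on $(\lD,\lG)$ rather than $(\LD,\LG)$; so the product notation in \eqref{eq:kac_rice_first} appears to be a notational slip in the statement rather than something you could derive. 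Your argument up to and including the independence of $(\LD,\LG)$ from $(\nabla_D\LD,\nabla_G\LG)$ is fine; just do not expect the covariance check to deliver the literal product form as written.
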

\begin{proof}
It is sufficient to check the conditions of Theorem \ref{thm:adler_kac_rice} with the above choices.

Conditions (a)-(f) are satisfied due to Gaussianity and the manifestly smooth definition of $L^{(D)}, L^{(G)}$. The moduli of continuity conditions as in (g) are satisfied separately for $L^{(D)}$ and its derivatives on $S^{N_D}$ and  for $L^{(G)}$ and its derivatives on $S^{N_G}$, as seen in the proof of the analogous result for a single spin glass in \cite{auffinger2013random}. But since $\mathcal{M}$ is just a direct product with product topology, it immediately follows that (g) is satisfied, so Theorem \ref{eq:adler_taylor_kac_rice} applies and we obtain (\ref{eq:kac_rice_first_old}). (\ref{eq:kac_rice_first}) follows simply, using the rules of conditional expectation.
\end{proof}

Define the Hessian matrix \begin{align*}
    \tilde{H} = \left(\begin{array}{cc} \nabla_D^2 \LD & \nabla_{GD} \LD \\ \nabla_{DG}\LG & \nabla^2_{G} \LG\end{array}\right).
\end{align*}
To make use of (\ref{eq:kac_rice_first}), we need the joint distribution of $\left(\lD, \pD_i \lD, \pD_{jk}\lD\right)$ and the independent  $\left(\lG, \pG_i \lG, \pG_{jk}\lG, \pD_l \lG, \pD_{mn}\lG \right)$. As in \cite{auffinger2013random}, we will simplify the calculation by evaluating in the region of the north poles on each hyper-sphere. $\lD$ behaves just like a single spin glass, and so we have \cite{auffinger2013random}:\begin{align}
    Var(\lD) &= 1,\label{eq:ld_var}\\
    Cov(\pD_i \lD, \pD_{jk} \lD) &= 0,\\
    \pD_{ij}\lD ~|~ \{\lD=x_D\} &\sim \sqrtsign{(N_D-1)p(p-1)}GOE^{N_D - 1} - x_DpI.
\end{align}
To find the joint and thence conditional distributions for $\lG$, we first note that \begin{align}
    Cov(\lG(\vwD, \vwG), \lG({\vwD}', {\vwG}')) = \left(\vwD\cdot {\vwD}' + \vwG \cdot {\vwG}'\right)^{p+q}
\end{align}
from which, by comparing with \cite{auffinger2013random}, one can write down the necessary expressions, at the north poles in a coordinate basis:
\begin{align}
    Var(\lG) &= 2^{p+q},\label{eq:var_lg}\\
    Cov(\pG_{ij}\lG, \lG) &= - (p+q)2^{p+q}\delta_{ij},\\
    Cov(\pD_{ij} \lG, \lG) &= - (p+q)2^{p+q}\delta_{ij},\\
    Cov(\pG_{ij}\lG,  \pG_{kl}\lG) &=  2^{p+q}\left[ (p+q)(p+q-1)\left(\delta_{ik}\delta_{jl} + \delta_{il}\delta_{jk}\right) + (p+q)^2 \delta_{ij}\delta_{kl}\right],\\
     Cov(\pG_{ij}\lG,  \pD_{kl}\lG) &=  2^{p+q} (p+q)^2 \delta_{ij}\delta_{kl},\\
      Cov(\pG_i\pD_j\lG,  \pG_{k}\pD_l\lG) &= 2^{p+q} (p+q)(p+q-1) \delta_{ik}\delta_{jl},\label{eq:ginibre}\\
      Cov(\pG_{ij}\lG,  \pG_{k}\pD_l\lG) &= 0\label{eq:gin_goe1}\\
    Cov(\pD_{ij}\lG,  \pD_{k}\pG_l\lG) &= 0\label{eq:gin_goe2}.
\end{align}
Also, all first derivatives of $\lG$ are clearly independent of $\lG$ and its second derivatives by the same reasoning as in \cite{auffinger2013random}. Note that \begin{align}
    Cov(\partial^{(D)}_i L^{(D)}, \partial^{(D)}_j L^{(D)})& = (p + \sigma_z^2 2^{p+q}(p+q))\delta_{ij}\\
    Cov(\partial^{(G)}_iL^{(G)}, \partial^{(G)}_j L^{(G)})& = \sigma^2_z 2^{p+q}(p+q)\delta_{ij}\\
     Cov(\partial^{(D)}_iL^{(D)}, \partial^{(G)}_j L^{(G)})& = 0
\end{align}
and so \begin{align}\label{eq:grad_dens_0}
    \varphi_{\left(\nabla_D L^{(D)}, \nabla_G L^{(G)}\right)}(0) = (2\pi)^{-\frac{N-2}{2}} \left(p + \sigma_z^22^{p+1}(p+q)\right)^{-\frac{N_D - 1}{2}} \left(\sigma_z^2 2^{p+q} (p+q)\right)^{-\frac{N_G-1}{2}}.
\end{align}
We need now to calculate the joint distribution of $(\pD_{ij}\lG, \pG_{kl}\lG)$ conditional on $\{\lG = x_G\}$. Denote the covariance matrix for $(\pD_{ij}\lG, \pG_{kl}\lG, \lG)$ by \begin{align}
    \Sigma = \left(\begin{array}{cc}
         \Sigma_{11}&\Sigma_{12}  \\
         \Sigma_{21}&\Sigma_{22} 
    \end{array}\right)
\end{align}
where \begin{align}
    \Sigma_{11} &= 2^{p+q}\left(\begin{array}{cc}
       (p+1)(p+q-1)(1 + \delta_{ij}) + (p+q)^2\delta_{ij}  & (p+q)^2 \delta_{ij}\delta_{kl} \\
             (p+q)^2 \delta_{ij}\delta_{kl} & (p+1)(p+q-1)(1 + \delta_{kl}) + (p+q)^2\delta_{kl}  
    \end{array}\right),\\
    \Sigma_{12} &= -2^{p+q} (p+q)\left(\begin{array}{c}
         \delta_{ij}  \\
          \delta_{kl}
    \end{array}\right),\\
      \Sigma_{21} &= -2^{p+q} (p+q)\left(\begin{array}{cc}
         \delta_{ij}  &\delta_{kl}
    \end{array}\right),\\
    \Sigma_{22} &= 2^{p+q}.
\end{align}
The conditional covariance is then \begin{align}
    \bar{\Sigma} = 2^{p+q}(p+1)(p+q-1)\left(\begin{array}{cc}
       1 + \delta_{ij} & 0 \\
            0 & 1 + \delta_{kl}
    \end{array}\right).\label{eq:double_goe}
\end{align}

In summary, from (\ref{eq:double_goe}) and (\ref{eq:ginibre}-\ref{eq:gin_goe2}) we obtain \begin{align}
    \left(\begin{array}{cc}
        -\nabla_D^2\lG & -\nabla_G\nabla_D \lG  \\
         \nabla_D\nabla_G \lG & \nabla^2 \lG  
    \end{array}\right) ~|~ \{\lG = x_G\} &\overset{d}{=} 
   \sqrtsign{2^{p+q+1}(p+q)(p+q-1)} \left(\begin{array}{cc}
       \sqrtsign{N_D -1}M^{(D)}_1 & -2^{-1/2}G \\
         2^{-1/2}G^T & \sqrtsign{N_G - 1}M^{(G)}
    \end{array}\right)\notag\\
    &~~~~~~- (p+q)x_G2^{p+1} \left(\begin{array}{cc}
        -I_{N_D} & 0  \\
         0 & I_{N_G} 
    \end{array}\right)
\end{align}
where $M^{(D)}_1\sim GOE^{N_D - 1}$ and $M^{(G)} \sim GOE^{N_G - 1}$ are independent GOEs and $G$ is an independent $N_D - 1 \times N_G - 1$ Ginibre matrix with entries of unit variance. Therefore, conditional on $\{(\lD, \lG) = (x_D, x_G)\}$, \begin{align}
    \tilde{H} \overset{d}{=}
  \sqrtsign{2p(p-1)} \left(\begin{array}{cc}
       \sqrtsign{N_D -1}M^{(D)}_2 & 0 \\
         0 & 0
    \end{array}\right) +&  \sigma_z\sqrtsign{2^{p+q+1}(p+q)(p+q-1)} \left(\begin{array}{cc}
       \sqrtsign{N_D -1}M^{(D)}_1 & -2^{-1/2}G \\
        2^{-1/2} G^T & \sqrtsign{N_G - 1}M^{(G)}
    \end{array}\right) \notag\\
   -& \sigma_z(p+q)x_G2^{p+q} \left(\begin{array}{cc}
        -I_{N_D} & 0  \\
         0 & I_{N_G} 
    \end{array}\right)  - px_D\left(\begin{array}{cc}
        I_{N_D} & 0  \\
         0 & 0
    \end{array}\right)  
\end{align}
where $M^{(D)}_2$ is another independent $GOE^{N_D - 1}$ matrix. We can simplify to obtain:
\begin{align}
    \tilde{H} =  \left(\begin{array}{cc}
       \sigma_D\sqrtsign{N_D -1}M^{(D)} & -2^{-1/2}\sigma_GG, \\
       2^{-1/2} \sigma_G G^T & \sigma_G\sqrtsign{N_G - 1}M^{(G)}
    \end{array}\right)
   -\sigma_z(p+q)x_G2^{p+q} \left(\begin{array}{cc}
        -I_{N_D} & 0  \\
         0 & I_{N_G} 
    \end{array}\right)  - px_D\left(\begin{array}{cc}
        I_{N_D} & 0  \\
         0 & 0
    \end{array}\right) \label{eq:H_first}
\end{align}
where \begin{align}\sigma_G&= \sigma_z\sqrtsign{2^{p+q+1}(p+q)(p+q-1)}\\
\sigma_D &= \sqrtsign{\sigma_G^2 + 2p(p-1)}
\end{align}
and $M^{(D)}\sim GOE^{N_D - 1}$ is a GOE matrix independent of $M^{(G)}$ and $G$.\\

Alternatively, because $M^{(D)}_{1,2} \overset{d}{=} -M^{(D)}_{1,2}$, let us write $\tilde{H}$ as \begin{align}
    \tilde{H} =& \sigma_zJ\left(\sqrtsign{2^{p+q+1}(p+q)(p+q-1)(N_D + N_G - 2)}M_1 - (p+q)x_G2^{p+q}I\right) \notag\\
  &+  \left(\sqrtsign{2p(p-1)(N_D - 1)}\left(\begin{array}{cc} M_2 & 0 \\ 0 & 0 \end{array}\right) - px_D\left(\begin{array}{cc} I_{N_D} & 0 \\ 0 & 0 \end{array}\right)\right)\notag\\
  \overset{d}{=}& J\Bigg[ \sigma_z\sqrtsign{2^{p+q+1}(p+q)(p+q-1)(N_D + N_G - 2)}M_1 - \sigma_z(p+q)x_G2^{p+q}I\notag\\
  &+ \sqrtsign{2p(p-1)(N_D - 1)}\left(\begin{array}{cc} M_2 & 0 \\ 0 & 0 \end{array}\right) + px_D\left(\begin{array}{cc} I_{N_D} & 0 \\ 0 & 0 \end{array}\right)\Bigg]
\end{align}

where $M_1\sim GOE^{N_D + N_G - 2}$ is a GOE matrix of size $N_D + N_G-2$, $M_2\sim GOE^{N_D - 1}$ is a GOE matrix of size $N_D-1$ and \begin{align}
    J = \left(\begin{array}{cc} -I_{N_D} & 0 \\ 0 & I_{N_G} \end{array}\right).
\end{align}
If follows that \begin{align}
    |\det \tilde{H}| \overset{d}{=}& \Bigg|\det\Bigg[ \sigma_z\sqrtsign{2^{p+q+1}(p+q)(p+q-1)(N_D + N_G - 2)}M_1 - \sigma_z(p+q)x_G2^{p+q}I\notag\\
  &+ \sqrtsign{2p(p-1)(N_D - 1)}\left(\begin{array}{cc} M_2 & 0 \\ 0 & 0 \end{array}\right) + px_D\left(\begin{array}{cc} I_{N_D} & 0 \\ 0 & 0 \end{array}\right)\Bigg]\Bigg|.\label{eq:H_rewritten}
\end{align}

Let $b = \sqrtsign{2^{p+q}(p+q)(p+q-1)}\sigma_z$, $b_1= \sqrtsign{p(p-1)\kappa}$ and \begin{align}
    x = \frac{\sigma_z(p+q)2^{p+q}}{\sqrtsign{N-2}}x_G, ~~~~  x_1= -\frac{p}{\sqrtsign{N-2}}x_D.\label{eq:x_x1_def}
\end{align}  Then \begin{align}
    |\det \tilde{H}| &= \left(2(N-2)\right)^{\frac{N-2}{2}} |\det H(x, x_1)|,\\
    H(x, x_1) &\equiv bM + b_1\left(\begin{array}{cc} M_1 & 0 \\ 0 & 0 \end{array}\right) - x - x_1\left(\begin{array}{cc} I_{N_D} & 0 \\ 0 & 0 \end{array}\right).
\end{align}

The desired complexity term then comes from (\ref{eq:kac_rice_first}), (\ref{eq:grad_dens_0}) \begin{align}
    \mathbb{E} C_N = K_N \int_B    \sqrtsign{\frac{N-2}{2\pi s^2}}e^{-\frac{N-2}{2s^2}x^2}dx ~ \sqrtsign{\frac{N-2}{2\pi s_1^2}} e^{ -\frac{N-2}{2s_1^2} x_1^2}dx_1 \mathbb{E}|\det H(x, x_1)|\label{eq:ecn}
\end{align}
where \begin{align}
    K_N =\omega_{\kappa N}\omega_{\kappa'N} (2(N-2))^{\frac{N-2}{2}} (2\pi)^{-\frac{N-2}{2}}\left(p + \sigma_z^22^{p+1}(p+q)\right)^{-\frac{\kappa N-1}{2}} \left(\sigma_z^2 2^{p+q} (p+q)\right)^{-\frac{\kappa' N-1}{2}}\label{eq:KN_def}
\end{align}
 and \begin{align}
    B =  \left\{(x, x_1)\in\mathbb{R}^2 ~:~ x\leq \frac{1}{\sqrtsign{2}}(p+q)2^{p+q} u_G, ~~ x_1 \geq -(p+q)^{-1} 2^{-(p+q)}p x - \frac{p}{\sqrtsign{2}}u_D\right\}
 \end{align}
 and the variances are (recall (\ref{eq:ld_var}), (\ref{eq:var_lg}), (\ref{eq:x_x1_def})) \begin{align}
    s^2 = \frac{1}{2}\sigma_z^2(p+q)^2 2^{3(p+q)}, ~~~ s_1^2 = \frac{p^2}{2},
\end{align}
 and $\omega_N = \frac{2\pi^{N/2}}{\Gamma(N/2)}$ is the surface area of the $N$ sphere. The domain of integration $B$ arises from the contraints $ L^{(D)} \in (-\infty, \sqrtsign{N} u_D)$ and $L^{(G)} \in (-\infty, \sqrtsign{N} u_G)$. 
 
 We will need the asymptotic behaviour of the constant $K_N$, which we now record in a small lemma.

\begin{lemma}\label{lemma:kn}
As $N\rightarrow\infty$, \begin{align}
K_N \sim 2^{\frac{N}{2}}\pi^{N/2} \left(\kappa^{\kappa} \kappa'^{\kappa'}\right)^{-N/2} \sqrtsign{\kappa\kappa'}\left(p + \sigma_z^22^{p+1}(p+q)\right)^{-\frac{\kappa N-1}{2}} \left(\sigma_z^2 2^{p+q} (p+q)\right)^{-\frac{\kappa' N-1}{2}}
\end{align}
\end{lemma}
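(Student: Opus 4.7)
The plan is to prove this via a straightforward application of Stirling's formula $\Gamma(x) \sim \sqrt{2\pi/x}\,(x/e)^x$ to the two Gamma functions hiding inside $\omega_{\kappa N}\omega_{\kappa' N}$, followed by careful bookkeeping of the remaining powers of $2$, $\pi$, $e$ and $N$. The post-Stirling factors $\left(p + \sigma_z^2 2^{p+1}(p+q)\right)^{-(\kappa N-1)/2}$ and $\left(\sigma_z^2 2^{p+q}(p+q)\right)^{-(\kappa' N-1)/2}$ appear identically on both sides of the claimed asymptotic and can be carried through untouched, so the task reduces to showing that
\begin{align*}
\omega_{\kappa N}\,\omega_{\kappa' N}\,(2(N-2))^{(N-2)/2}(2\pi)^{-(N-2)/2} \;\sim\; 2^{N/2}\pi^{N/2}\,(\kappa^{\kappa}\kappa'^{\kappa'})^{-N/2}\sqrt{\kappa\kappa'}.
\end{align*}

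First I would apply Stirling to obtain
\begin{align*}
\omega_{\kappa N} \;=\; \frac{2\pi^{\kappa N/2}}{\Gamma(\kappa N/2)} \;\sim\; 2\sqrt{\frac{\kappa N}{4\pi}}\left(\frac{2\pi e}{\kappa N}\right)^{\kappa N/2},
\end{align*}
and an analogous expression for $\omega_{\kappa' N}$. Multiplying these two asymptotics and exploiting $\kappa+\kappa'=1$ to combine the exponents yields a single expression in which the $N$-dependent parts collect as $(2\pi e/N)^{N/2}$ and the $\kappa$-dependent parts as $(\kappa^{\kappa}\kappa'^{\kappa'})^{-N/2}$, multiplied by the prefactor $N\sqrt{\kappa\kappa'}/\pi$ coming from the two square roots.

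Next I would handle the elementary factor $(2(N-2))^{(N-2)/2}(2\pi)^{-(N-2)/2} = \bigl((N-2)/\pi\bigr)^{(N-2)/2}$, using the standard limit $(1-2/N)^{(N-2)/2} \to e^{-1}$ to write $(N-2)^{(N-2)/2}\sim e^{-1}N^{(N-2)/2}$. Combining this with the output of the Stirling step, the powers of $N$ cancel (with one surplus factor of $1/N$ killing the $N$ prefactor), the $e^{N/2}$ produced by Stirling cancels against the Coulomb gas exponent, and the powers of $\pi$ reorganise to give the stated $\pi^{N/2}$ and constant factors.

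There is no real mathematical obstacle here; the work is entirely arithmetical. The only place one has to be careful is ensuring the leading-order constants (the $N$, $1/\pi$ and $e^{-1}$ prefactors, and the subleading $(N-2)/N$ corrections in the exponents) combine correctly so that no stray $N$-dependent polynomial factor survives. Throughout, I would use only that $\kappa, \kappa' \in (0,1)$ are fixed and $\kappa+\kappa'=1$, so all Stirling estimates are uniform in the relevant quantities as $N \to \infty$.
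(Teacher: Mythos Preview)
Your proposal is correct and follows essentially the same route as the paper: apply Stirling's formula to the two Gamma functions inside $\omega_{\kappa N}\omega_{\kappa' N}$, use $(1-2/N)^{(N-2)/2}\to e^{-1}$ to handle $(N-2)^{(N-2)/2}$, and then collect powers of $2$, $\pi$, $e$, $N$ and $\kappa,\kappa'$. The only slip is your mention of a ``Coulomb gas exponent'' in the cancellation step---there is no such object in this purely algebraic lemma, so that phrase should simply be dropped; otherwise the bookkeeping plan matches the paper's proof exactly.
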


\begin{proof} 
By Stirling's formula \begin{align}
    K_N &\sim 4 \pi^{N} \left(\frac{4\pi}{\kappa N}\right)^{-1/2}\left(\frac{4\pi}{\kappa' N}\right)^{-1/2}\left(\frac{\kappa N}{2 e}\right)^{-\kappa N/2}
    \left(\frac{\kappa' N}{2 e}\right)^{-\kappa' N/2}
    \left(2(N-2)\right)^{\frac{N-2}{2}} \left(2\pi\right)^{-\frac{N-2}{2}}\notag\\
    &~~~~~~ \left(p + \sigma_z^22^{p+1}(p+q)\right)^{-\frac{\kappa N-1}{2}} \left(\sigma_z^2 2^{p+q} (p+q)\right)^{-\frac{\kappa' N-1}{2}}\notag\\
    & \sim 2^{\frac{N}{2}}\pi^{N/2} \left(\kappa^{\kappa} \kappa'^{\kappa'}\right)^{-N/2} \sqrtsign{\kappa\kappa'}\left(p + \sigma_z^22^{p+1}(p+q)\right)^{-\frac{\kappa N-1}{2}} \left(\sigma_z^2 2^{p+q} (p+q)\right)^{-\frac{\kappa' N-1}{2}}
\end{align}
where we have used $(N-2)^{\frac{N-2}{2}} = N^{\frac{N-2}{2}} \left(1- \frac{2}{N}\right)^{\frac{N-2}{2}} \sim N^{\frac{N-2}{2}} e^{-N/2}.$
\end{proof}

 \section{Limiting spectral density of the Hessian}\label{sec:lsd}
Our intention now is to compute the the expected complexity $\expect C_N$ via the Coulomb gas method. The first step in this calculation is to obtain the limiting spectral density of the random matrix \begin{align}
    H' = bM + b_1\left(\begin{array}{cc}
     M_1 & 0 \\ 0 & 0\end{array}\right) - x_1\left(\begin{array}{cc}
     I & 0 \\ 0 & 0\end{array}\right),
\end{align}
where, note, $H = H' - xI$. Here the upper-left block is of dimension $\kappa N$, and the overall dimension is $N$. Let $\mu_{eq}$ be the limiting spectral measure of $H'$ and $\rho_{eq}$ its density. The supersymmetric method provides a way of calculating the expected Stieltjes transforms of $\rho_{eq}$ \cite{Verbaarschot_2004}:\begin{align}
    \langle G(z) \rangle &= \frac{1}{N} \frac{\partial}{\partial J}\Bigg|_{J=0} Z(J)\\
    Z(J) &\coloneqq \mathbb{E}_{H'} \frac{\det(z - H' + J)}{\det(z - H')}.
\end{align}
Recall that a density and its Stieltjes transform are related by the Stieltjes inversion formula  \begin{align}\label{eq:stieljes}
     \rho_{eq}(z) = \frac{1}{\pi}\lim_{\epsilon \rightarrow 0} \Im \langle G(z + i\epsilon)\rangle.
 \end{align}
The function $Z(J)$ can be computed using a supersymmetric representation of the ratio of determinants. Firstly, we recall an elementary result from multivariate calculus, where $M$ is a real matrix: \begin{align}\label{eq:det_complex_id}
    \int \prod_{i=1}^N \frac{d\phi_i d\phi_i^*}{2\pi} e ^{-i\phi^{\dagger}M\phi} = \frac{1}{\det M}.
\end{align}
By introducing the notion of \emph{Grassmann variables} and \emph{Berezin integration}, we obtain a complimentary expression: \begin{align}\label{eq:det_grass_id}
    \int \prod_{i=1}^N \frac{d\chi_i d\chi_i^*}{-i} e^{-i\chi^{\dagger}M\chi} = \det{M}.\end{align}
Here the $\chi_i, \chi_i^*$ are purely algebraic objects defined by the anti-commutation rule \begin{align}
    \chi_i\chi_j = - \chi_j\chi_i, ~~ \forall i,j\label{eq:anticom_def}
\end{align}
and $\chi_i^*$ are separate objects, with the complex conjugation unary operator ${}^*$ defined so that $\left(\chi_i^*\right)^* = -\chi_i^*,$ and Hermitian conjugation is then defined as usual by $\chi^{\dagger} = (\chi^T)^*.$ The set of variables $\{\chi_i, \chi_i^*\}_{i=1}^N$ generate a \emph{graded algebra} over $\mathbb{C}$. Mixed vectors of commuting and anti-commuting variables are called \emph{supervectors}, and they belong to a vector space called \emph{superspace}. The integration symbol $\int d\chi_id\chi^*$ is defined as a formal algebraic linear operator by the properties \begin{align}\label{eq:berezin_def}
    \int d\chi_i = 0, ~~~~~ \int d\chi_i ~\chi_j = \delta_{ij}.
\end{align}
Functions of the the Grassmann variables are defined by their formal power series, e.g. \begin{align}
    e^{\chi_i} = 1 + \chi_i + \frac{1}{2}\chi_i^2 + \ldots = 1 + \chi_i
\end{align}
where the termination of the series follows from $\chi_i^2 = 0 ~~ \forall i$, which is an immediate consequence of (\ref{eq:anticom_def}). From this it is apparent that (\ref{eq:berezin_def}), along with (\ref{eq:anticom_def}), is sufficient to define Berezin integration over arbitrary functions of arbitrary combinations of Grassmann variables.
Using the integral results (\ref{eq:det_complex_id}), (\ref{eq:det_grass_id})  we can then write \begin{align}
 \frac{\det(z - H' + J)}{\det(z - H')} &= \int d\Psi \exp\left\{ -i\phi^{\dagger}(z-H') \phi - i \chi^{\dagger}(z+J - H')\chi\right\}\label{eq:susy_lsd_ratio}
 \end{align}
 where the measure is \begin{align}
     d\Psi = \prod_{t=1}^2 \frac{d\phi[t]d\phi^{*}[t] d\chi[t] d\chi^*[t]}{-2\pi i},
 \end{align}
$\phi$ is a vector of $N$ complex commuting variables, $\chi$ and  $\chi^*$ are vectors of $N$ Grassmann variables, and we use the $[t]$ notation to denote the splitting of each of the vectors into the first $\kappa N$ and last $(1-\kappa)N$ components, as seen in \cite{guhr1990isospin}: \begin{align}
    \phi = \left(\begin{array}{c} \phi[1] \\ \phi[2]\end{array}\right).
\end{align}
We then split the quadratic form expressions in (\ref{eq:susy_lsd_ratio}) \begin{align}
    & -\phi^{\dagger}(z-H') \phi -  \chi^{\dagger}(z+J - H')\chi\notag\\
    = & -\phi[1]^{\dagger}(x_1 - b_1M_1) \phi[1]
     -\phi^{\dagger}(z - bM) \phi
    - \chi[1]^{\dagger}(x_1 - b_1 M_1)\chi[1]
        - \chi^{\dagger}(z + J - b M)\chi.
\end{align}
Taking the GOE averages is now simple \cite{Verbaarschot_2004,nock2017characteristic}:\begin{align}
    \mathbb{E}_M \exp\left\{-ib\phi^{\dagger}M \phi - ib\chi^{\dagger}M\chi\right\} &= \exp\left\{- \frac{b^2}{4N}\trg Q^2\right\},\\
  \mathbb{E}_M \exp\left\{-ib_1\phi[1]^{\dagger}M_1 \phi[1] - ib_1\chi[1]^{\dagger}M_1\chi[1]\right\} &= \exp\left\{- \frac{b_1^2}{4\kappa N}\trg Q[1]^2\right\},
\end{align}
where the supersymmetric matrices are given by \begin{align}
    Q = \left(\begin{array}{cc} \phi^{\dagger}\phi & \phi^{\dagger}\chi \\ \chi^{\dagger}\phi & \chi^{\dagger}\chi\end{array}\right), ~~~    Q[1] = \left(\begin{array}{cc} \phi[1]^{\dagger}\phi[1] & \phi[1]^{\dagger}\chi[1] \\ \chi[1]^{\dagger}\phi[1] & \chi[1]^{\dagger}\chi[1]\end{array}\right).
\end{align}
Introducing the tensor notation \begin{align}
    \psi = \phi \otimes \left(\begin{array}{c} 1 \\ 0 \end{array}\right) + \chi \otimes \left(\begin{array}{c} 0 \\ 1 \end{array}\right), ~~     \psi[1] = \phi[1] \otimes \left(\begin{array}{c} 1 \\ 0 \end{array}\right) + \chi[1] \otimes \left(\begin{array}{c} 0 \\ 1 \end{array}\right)
\end{align}
and \begin{align}
    \zeta = \left(\begin{array}{cc} z & 0 \\ 0 & z + J\end{array}\right)
\end{align}we can compactly write \begin{align}
    Z(J) = \int d\Psi \exp\left\{ - \frac{b^2}{4N} \trg Q^2 - \frac{b_1^2}{4\kappa N} \trg Q[1]^2 - i \psi[1]^{\dagger}\psi[1]x_1 - i \psi^{\dagger}\zeta \psi\right\}.
\end{align}
We now perform two Hubbard-Stratonovich transformations \cite{Verbaarschot_2004} \begin{align}
    Z(J) &= \int d\Psi d\sigma d\sigma[1]\exp\left\{ - \frac{N}{b^2} \trg \sigma^2 - \frac{\kappa N}{b_1^2} \trg \sigma[1]^2 - i \psi[1]^{\dagger}(x_1 + \sigma[1])\psi[1] - i \psi^{\dagger}(\sigma + \zeta )\psi\right\},
\end{align}
where $\sigma$ and $\sigma[1]$ inherit their form from $Q, Q[1]$ \begin{align}
    \sigma = \left( \begin{array}{cc} \sigma_{BB}& \sigma_{BF} \\ \sigma_{FB} & i\sigma_{FF}\end{array}\right), ~~~     \sigma[1] = \left( \begin{array}{cc} \sigma_{BB}[1] & \sigma_{BF}[1] \\ \sigma_{FB}[1] & i\sigma_{FF}[1]\end{array}\right)
\end{align}
with $\sigma_{BB}, \sigma_{FF}, \sigma_{BB}[1], \sigma_{FF}[1]$ real commuting variables, and $\sigma_{BF}, \sigma_{FB}, \sigma_{BF}[1], \sigma_{FB}[1]$ Grassmanns; the factor $i$ is introduced to ensure convergence. Integrating out over $d\Psi$ is now a straightforward Gaussian integral in superspace, giving \begin{align}
    Z(J) &= \int d\Psi d\sigma d\sigma[1] \exp\left\{ - \frac{N}{b^2} \trg \sigma^2 - \frac{\kappa N}{b_1^2} \trg \sigma[1]^2 - i \psi[1]^{\dagger}(x_1 + \zeta + \sigma + \sigma[1])\psi[1] - i \psi[2]^{\dagger}(\sigma + \zeta )\psi[2]\right\}\notag\\
    & = \int d\sigma d\sigma[1] \exp\left\{ - \frac{N}{b^2} \trg \sigma^2 - \frac{\kappa N}{b_1^2} \trg \sigma[1]^2 - \kappa N\trg\log(x_1 + \zeta + \sigma + \sigma[1]) -  \kappa' N \trg\log (\sigma + \zeta )\right\}\notag\\
    &=\int d\sigma d\sigma[1] \exp\left\{ - \frac{N}{b^2} \trg (\sigma - \zeta)^2 - \frac{\kappa N}{b_1^2} \trg \sigma[1]^2 - \kappa N\trg\log(x_1  + \sigma + \sigma[1]) -  \kappa' N \trg\log \sigma\right\}.
\end{align}
Recalling the definition of $\zeta$, we have \begin{align}
    \trg(\sigma - \zeta)^2 = (\sigma_{BB} - z)^2 - (i\sigma_{FF} - z- J)^2 
\end{align}
and so one immediately obtains \begin{align}
    \frac{1}{N}\frac{\partial }{\partial J}\Bigg|_{J=0} Z(J) &= \frac{2}{b^2}\int d\sigma d\sigma[1] (z - i\sigma_{FF}) \exp\left\{ - \frac{N}{b^2} \trg (\sigma - z)^2 - \frac{\kappa N}{b_1^2} \trg \sigma[1]^2 - \kappa N\trg\log(x_1  + \sigma + \sigma[1]) -  \kappa' N \trg\log \sigma\right\}\notag\\
    &= \frac{2}{b^2}\int d\sigma d\sigma[1] (z - i\sigma_{FF}) \exp\left\{ - \frac{N}{b^2} \trg \sigma ^2 - \frac{\kappa N}{b_1^2} \trg \sigma[1]^2 - \kappa N\trg\log(x_1  + z + \sigma + \sigma[1]) -  \kappa' N \trg\log ( z+\sigma)\right\}\label{eq:lsd_pre_saddle}
\end{align}
To obtain the limiting spectral density (LSD), or rather its Stieltjes transform, one must find the leading order term in the $N\rightarrow\infty$ expansion for (\ref{eq:lsd_pre_saddle}). This can be done by using the saddle point method on the $\sigma,\sigma[1]$ manifolds. We know that the contents of the exponential must vanish at the saddle point, since the LSD is $\mathcal{O}(1)$, so we in fact need only compute $\sigma_{FF}$ at the saddle point. We can diagonalise $\sigma$ within the integrand of (\ref{eq:lsd_pre_saddle}) and absorb the diagonalising graded $U(1/1)$ matrix into $\sigma[1]$. The resulting saddle point equations for the off-diagonal entries of the new (rotated) $\sigma[1]$ dummy variable are trivial and immediately give that $\sigma[1]$ is also diagonal at the saddle point. The saddle point equations are then \begin{align}
    &\frac{2}{b_1^2}\sigma_{BB}[1] + \frac{1}{\sigma_{BB}[1] + \sigma_{BB} + x_1 + z} = 0\label{eq:boson_saddle1}\\
        &\frac{2}{b^2}\sigma_{BB} + \frac{\kappa}{\sigma_{BB}[1] + \sigma_{BB} + x_1 + z} + \frac{\kappa'}{\sigma_{BB} + x} = 0\label{eq:boson_saddle2}\\
 &\frac{2}{b_1^2}\sigma_{FF}[1] - \frac{1}{\sigma_{FF}[1] + \sigma_{FF} - ix_1 - iz} = 0\label{eq:ferm_saddle1}\\
&\frac{2}{b^2}\sigma_{FF} - \frac{\kappa}{\sigma_{FF}[1] + \sigma_{FF} - ix_1 -iz} - \frac{\kappa'}{\sigma_{FF} - iz} = 0.\label{eq:ferm_saddle2}
\end{align}

(\ref{eq:ferm_saddle1}) and (\ref{eq:ferm_saddle2}) combine to give an explicit expression for $\sigma_{FF}[1]$: \begin{align}
    \sigma_{FF}[1] = \frac{b_1^2}{2\kappa}\left(\frac{2}{b^2}\sigma_{FF} - \kappa'(\sigma_{FF} - iz)^{-1}\right)\label{eq:ferm_saddle3}.
\end{align}
With a view to simplifying the numerical solution of the coming quartic, we define $t = i(\sigma_{FF} - iz)$ and then a line of manipulation with (\ref{eq:ferm_saddle2}) and (\ref{eq:ferm_saddle3}) gives \begin{align}
    \left(t^2 -  zt - \kappa' b^2\right)\left((1 + \kappa^{-1}b^{-2} b_1^2)t^2 - (\kappa^{-1}b_1^2 b^{-2}z - x_1)t - \kappa'\kappa^{-1}b_1^2\right) + b^2\kappa t^2= 0\label{eq:master_quartic}.
\end{align}
 By solving (\ref{eq:master_quartic}) numerically for fixed values of $\kappa, b, b_1, x_1$, we can obtain the four solutions $t_1(z), t_2(z), t_3(z), t_4(z)$. These four solution functions arise from choices of branch for $(z, x_1)\in \mathbb{C}^2$ and determining the correct branch directly is highly non-trivial. However, for any $z\in\mathbb{R}$, at most one of the $t_i$ will lead to a positive LSD, which gives a simple way to compute $\rho_{eq}$ numerically using (\ref{eq:stieljes}) and (\ref{eq:lsd_pre_saddle}): \begin{align}\label{eq:t_lsd}
    \rho_{eq}(z) = \max_i\left\{-\frac{2}{b^2\pi} \Im t_i(z)\right\}.
\end{align}
Plots generated using (\ref{eq:t_lsd}) and eigendecompositions of matrices sampled from the distribution of $H'$ are given in Figure \ref{fig:spectra} and show good agreement between the two.
\begin{figure}[h]
    \centering
\begin{tabular}{ccc}
\subfloat[Merged]{\includegraphics[width=0.33\textwidth]{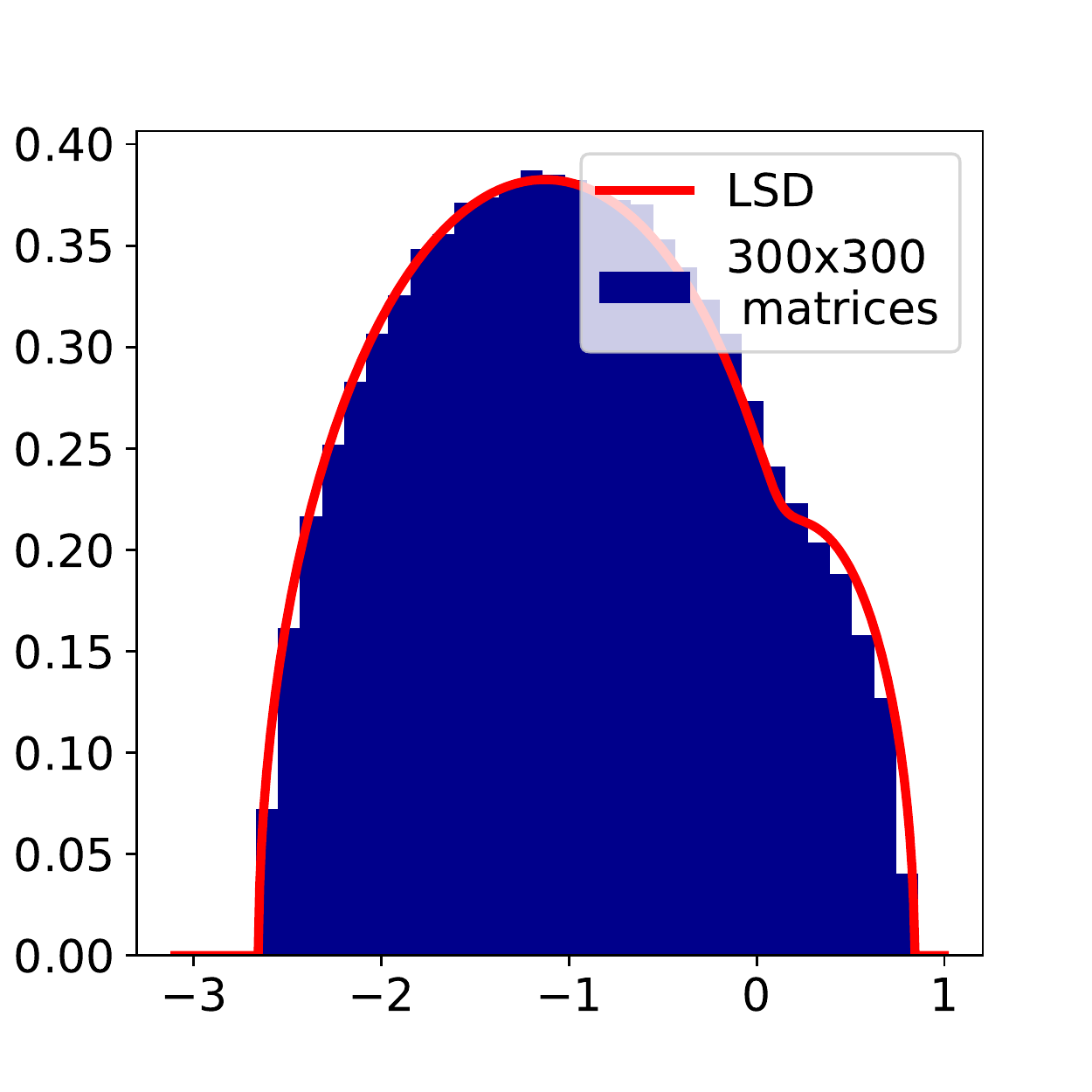}} &
\subfloat[Touching]{\includegraphics[width=0.33\textwidth]{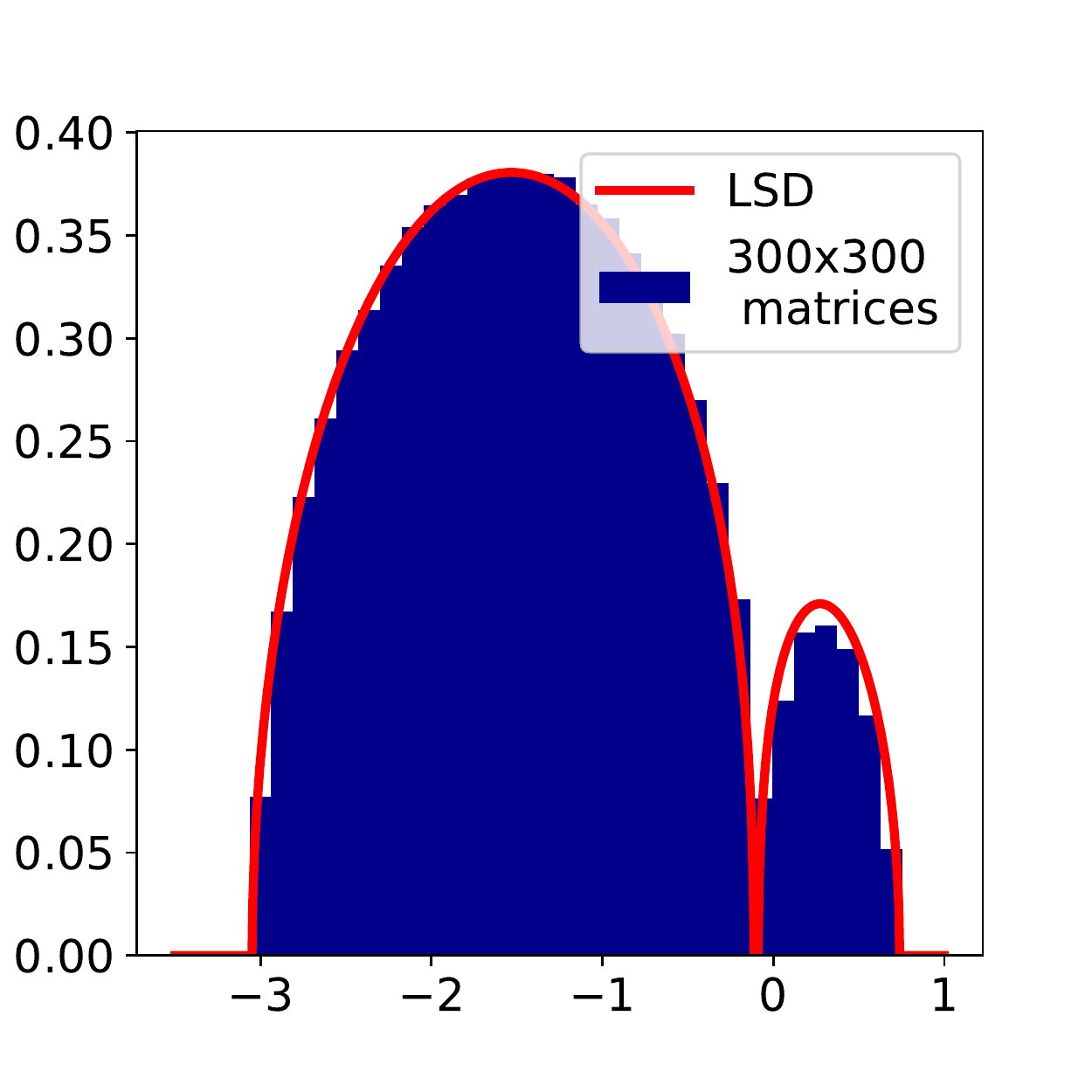}} &
\subfloat[Separate]{\includegraphics[width=0.33\textwidth]{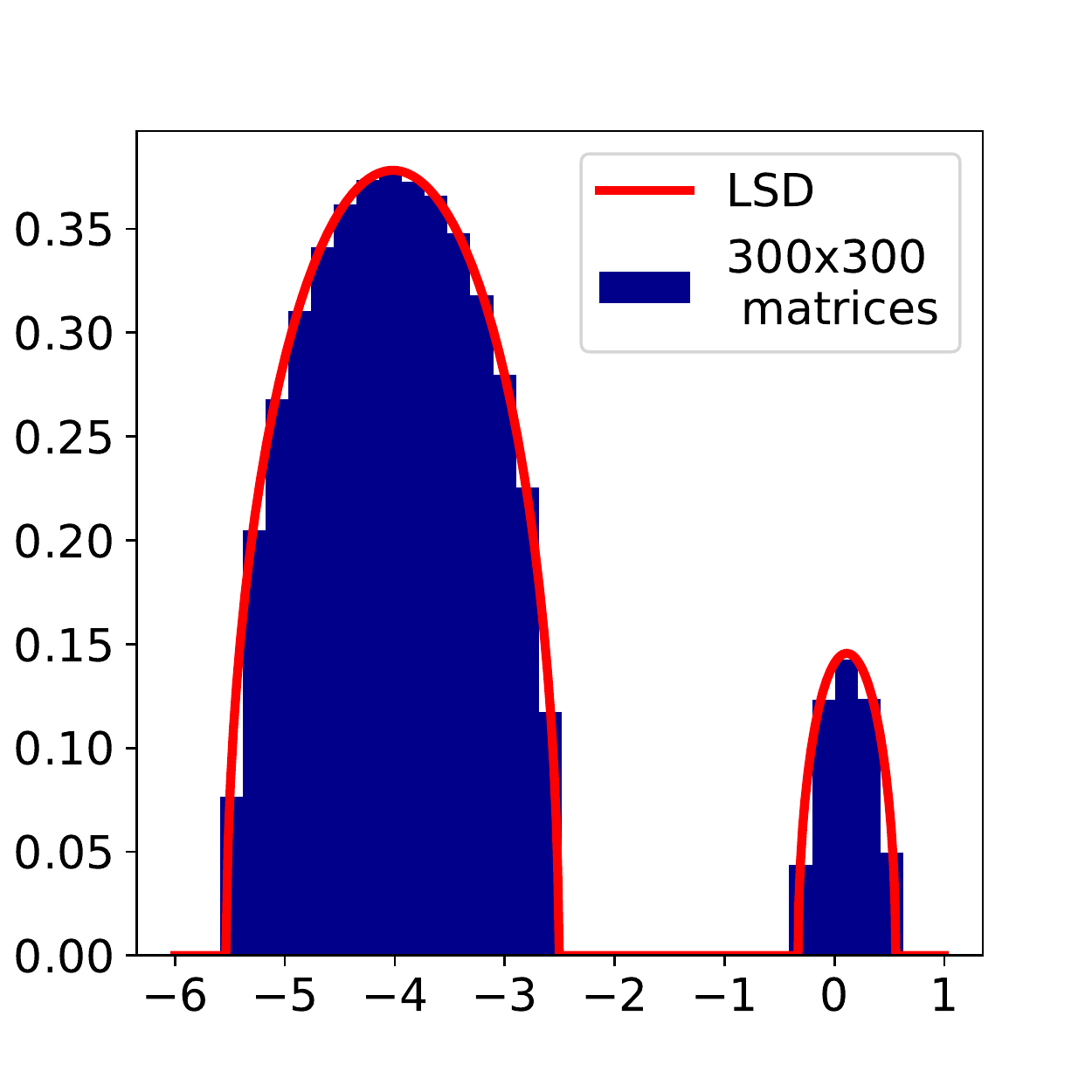}} 
\end{tabular}
\caption{Example spectra of $H'$ showing empirical spectra from 100 $300\times 300$ matrices and the corresponding LSDs computed from (\ref{eq:master_quartic}). Here $b=b_1=1$, $\kappa=0.9$, $\sigma_z$=1 and $x_1$ is varied to give the three different behaviours.}
\label{fig:spectra}
\end{figure}

\section{The asymptotic complexity}\label{sec:complexity}
In the previous section, we have found the equilibrium measure, $\mu_{eq}$, of the ensemble of random matrices \begin{align}
    H' = bM + b_1\left(\begin{array}{cc}
     M_1 & 0 \\ 0 & 0\end{array}\right) - x_1\left(\begin{array}{cc}
     I & 0 \\ 0 & 0\end{array}\right), ~~ M\sim GOE^N, ~ M_1\sim GOE^{\kappa N}.
\end{align}
The Coulomb gas approximation gives us a method of computing $\mathbb{E} |\det(H'-x)|$:
\begin{align}
    \mathbb{E} |\det(H'-x)| \approx \exp\left\{N\int \log|z - x| d\mu_{eq}(z)\right\}.\label{eq:quadrature}
\end{align}
We have access to the density of $\mu_{eq}$ pointwise (in $x$ and $x_1$) numerically, and so (\ref{eq:quadrature}) is a matter of one-dimensional quadrature. Recalling (\ref{eq:ecn}), we then have \begin{align}
   \mathbb{E} C_N \approx  K_N'\iint_B dxdx_1 ~ \exp\left\{-(N-2)\left(  \frac{1}{2s^2}x^2 + \frac{1}{2s_1^2} (x_1)^2 - \int\log|z - x| d\mu_{eq}(z) \right)\right\}\equiv K_N'\iint_B dxdx_1 ~ e^{-(N-2)\Phi(x, x_1)}
\end{align}

 where \begin{align}
     K_N' = K_N \sqrtsign{\frac{N-2}{2\pi s_1^2}} \sqrtsign{\frac{N-2}{2\pi s^2}}.
 \end{align}
Due to Lemma \ref{lemma:kn}, the constant term has asymptotic form \begin{align}
     \frac{1}{N}\log K_N' \sim \frac{1}{2}\log{2} + \frac{1}{2}\log{\pi} - \frac{\kappa}{2}\log\left(p + \sigma_z^22^{p+q}(p+q)\right) - \frac{\kappa'}{2} \log\left(\sigma_z^2(p+q) 2^{p+q}\right) - \frac{\kappa}{2}\log\kappa - \frac{\kappa'}{2}\log\kappa' \equiv K
 \end{align}
We then define the desired $\Theta(u_D, u_G)$ as \begin{align}
    \lim \frac{1}{N} \log\mathbb{E}C_N = \Theta(u_D, u_G)
\end{align}
and we have \begin{align}
    \Theta(u_D, u_G) = K - \min_B \Phi.
\end{align}
Using these numerical methods, we obtain the plot of $\Phi$ in $B$ and a plot of $\Theta$ for some example $p,q,\sigma_z, \kappa$ values, shown in Figures \ref{fig:phi_indom_latest}, \ref{fig:theta_latest}. Numerically obtaining the maximum of $\Phi$ on $B$ is not as onerous as it may appear, since $-\Phi$ grows quadratically in $|x|, |x_1|$ at moderate distances from the origin.

\begin{figure}[h]
    \centering
 \includegraphics[width=0.33\textwidth]{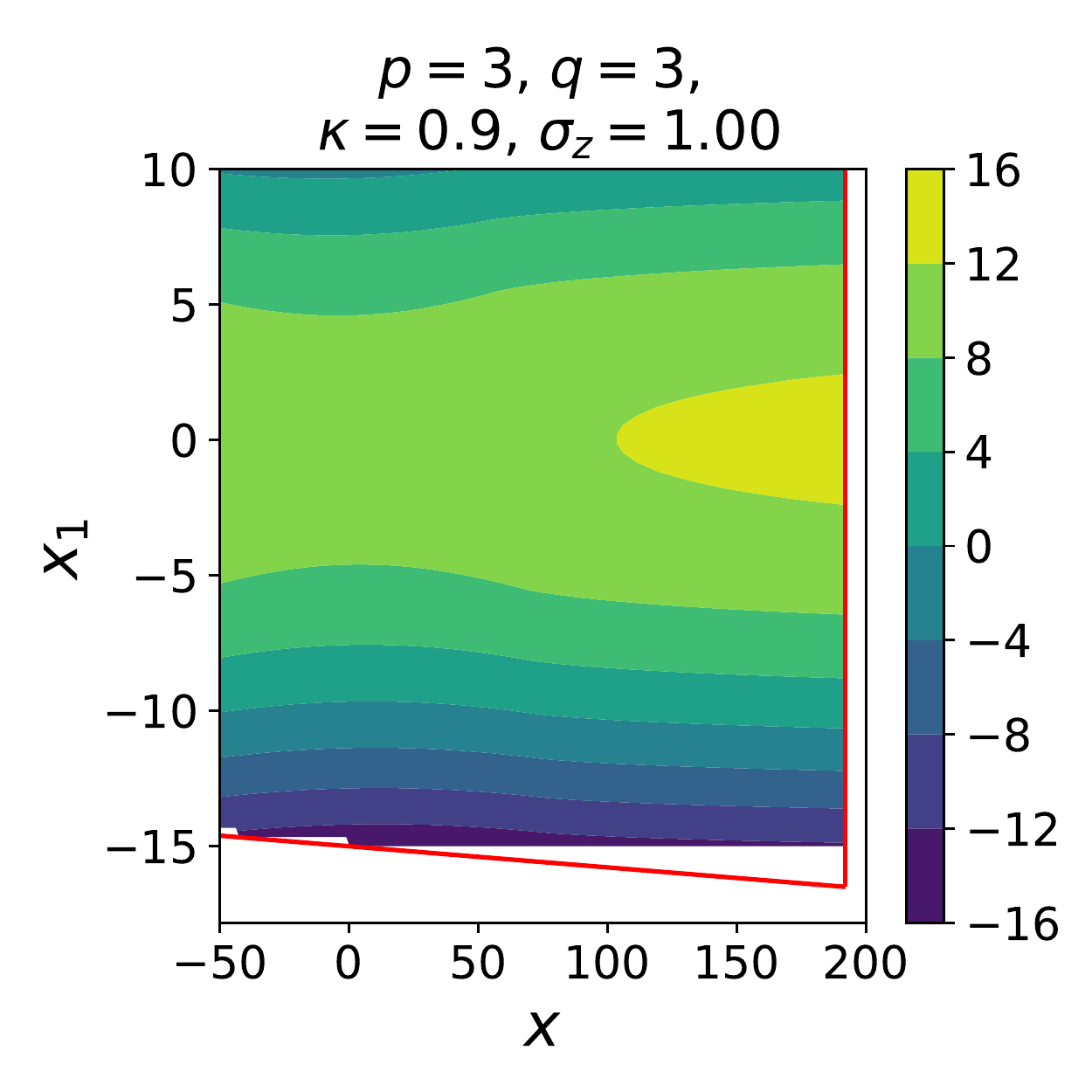}
    \caption{$\Phi$ for  $p=q=3, \sigma_z=1, \kappa=0.9$. Red lines show the boundary of the integration region $B$.}
    \label{fig:phi_indom_latest}
\end{figure}

\begin{figure}[h]
    \centering
    \includegraphics[width=\textwidth]{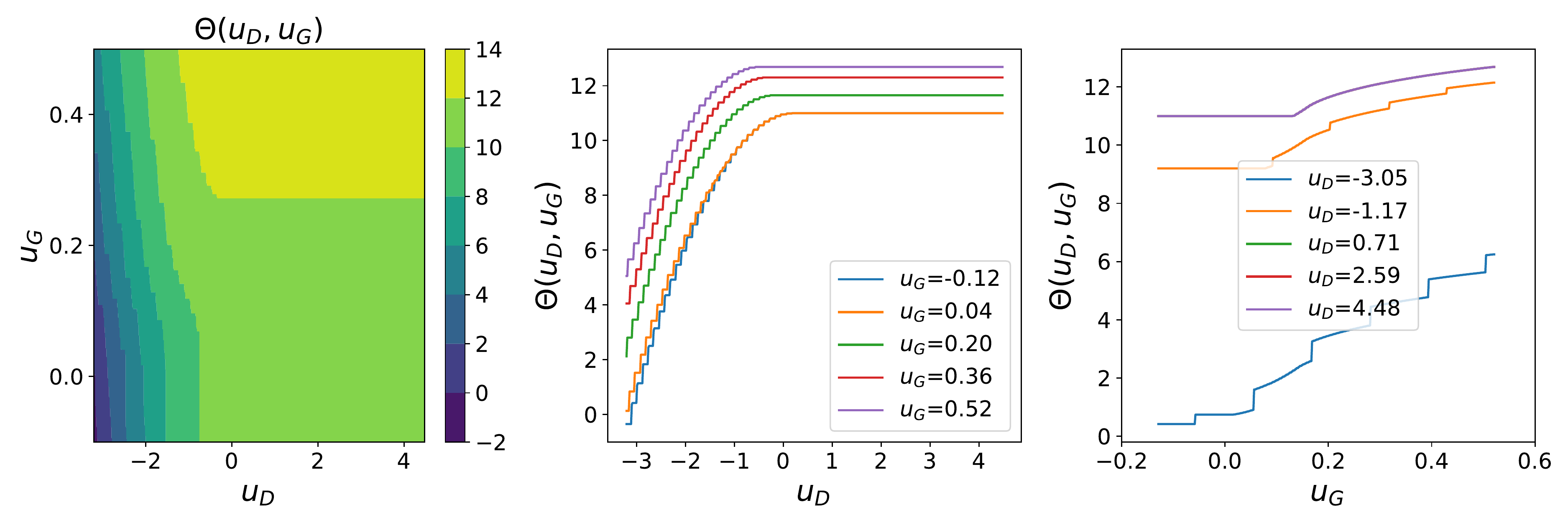}
    \caption{$\Theta$ and its cross-sections, fixing separately $u_D$ and $u_G$. Here $p=q=3, \sigma_z=1, \kappa=0.9$.}
    \label{fig:theta_latest}
\end{figure}

We numerically verify the legitimacy of this Coulomb point approximation with Monte Carlo integration \begin{align}
    \mathbb{E}|\det(H'-x)| \approx \frac{1}{n}\sum_{i=1}^n \prod_{j=1}^N |\lambda_j^{(i)} - x|,\label{eq:mc_det}
\end{align}
where $\lambda^{(i)}_j$ is the $j$-th eigenvalues of the $i$-th i.i.d. sample from the distribution of $H'$. The results, comparing $N^{-1}\log\mathbb{E}|\det(H'-x)|$ at $N=50$ for a variety of $x,x_1$ are show in Figure \ref{fig:mc_coulomb_verify}. Note the strong agreement even at such modest $N$, however to rigorously substantiate the Coulomb gas approximation in (\ref{eq:quadrature}), we must prove a concentration result.

\begin{figure}[h]
    \centering
    \includegraphics[width=\textwidth]{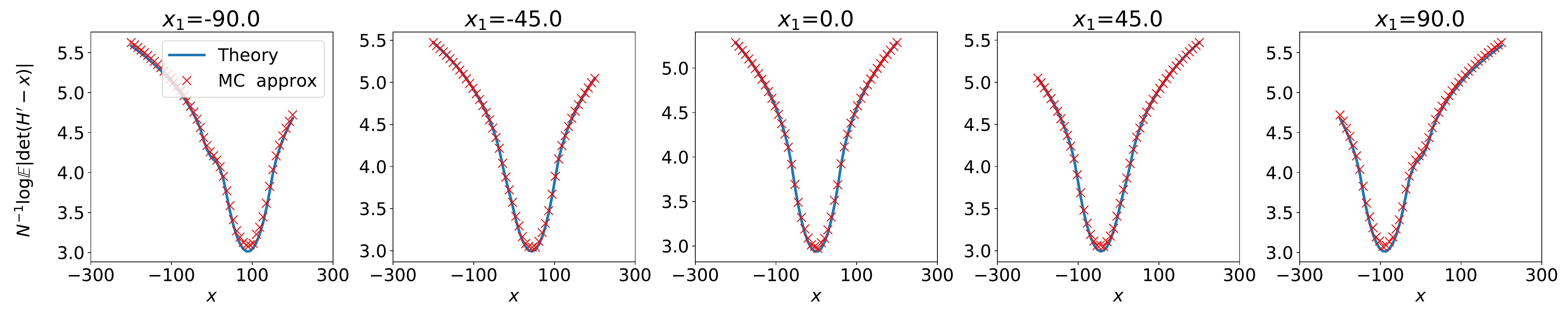}    
    \caption{Comparison of (\ref{eq:quadrature}) and (\ref{eq:mc_det}), verifying the Coulomb gas approximation numerically. Here $p=q=3, \sigma_z=1, \kappa=0.9$. Sampled matrices for MC approximation are dimension $N=50$, and $n=50$ MC samples have been used.}
    \label{fig:mc_coulomb_verify}
\end{figure}

\begin{lemma}\label{lemma:coulomb_det}
Let $(H_N)_{N=1}^{\infty}$ be a sequence of random matrices, where for each $N$ \begin{align}
    H_N \overset{d}{=}  bM + b_1\left(\begin{array}{cc}
     M_1 & 0 \\ 0 & 0\end{array}\right) - x_1\left(\begin{array}{cc}
     I & 0 \\ 0 & 0\end{array}\right)
\end{align}
and $M\sim GOE^N$, $M_1\sim GOE^{\kappa N}$. Let $\mu_N$ be the empirical spectral measure of $H_N$ and say $\mu_N\rightarrow \mu_{eq}$ weakly almost surely. Then for any $(x, x_1)\in\mathbb{R}^2$\begin{align}
\mathbb{E} |\det(H_N-xI)| = \exp\left\{N (1 + o(1))\int \log|z - x| d\mu_{eq}(z)\right\}
\end{align}
 as $N\rightarrow\infty$.
\end{lemma}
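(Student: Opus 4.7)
\medskip

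\noindent\textbf{Proposal.} Let $Y_N(x) = \frac{1}{N}\log|\det(H_N - xI)| = \int \log|z-x|\, d\mu_N(z)$ and $Y^*(x) = \int \log|z-x|\, d\mu_{eq}(z)$. The goal is to show $\frac{1}{N}\log \mathbb{E}\,e^{NY_N(x)} \to Y^*(x)$, which we prove by matching lower and upper bounds.

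\medskip

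\noindent\emph{Lower bound.} By Jensen's inequality $\log\mathbb{E}|\det(H_N-xI)| \geq N\, \mathbb{E}[Y_N(x)]$, so it suffices to show $\mathbb{E}[Y_N(x)] \to Y^*(x)$. Introduce the truncation $f_\epsilon(z) = \log\max(|z-x|,\epsilon)$, which is bounded and continuous. Weak almost sure convergence $\mu_N\to\mu_{eq}$ plus dominated convergence then yields $\mathbb{E}\int f_\epsilon\, d\mu_N \to \int f_\epsilon\, d\mu_{eq}$. The remaining task is to control the contribution of eigenvalues within $\epsilon$ of $x$, i.e.\ to bound $\limsup_N \mathbb{E}\!\int (f_\epsilon - f)\, d\mu_N$ uniformly in $N$ and drive it to $0$ as $\epsilon\downarrow 0$. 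This reduces to an estimate on $\mathbb{E}\,\mu_N([x-\epsilon,x+\epsilon])$, for which one uses the subleading-order behaviour of $\rho_{eq}$ near its support boundaries, extracted from the quartic (\ref{eq:master_quartic}), together with a Wegner-type bound to transfer spectral-density control from $\mu_{eq}$ to $\mathbb{E}\mu_N$.

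\medskip

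\noindent\emph{Upper bound.} Split according to the event $A_\delta = \{|Y_N(x) - Y^*(x)| < \delta\}$. On $A_\delta$, $|\det(H_N-xI)|\leq e^{N(Y^*(x)+\delta)}$. On $A_\delta^c$, apply Cauchy--Schwarz:
\begin{align*}
\mathbb{E}\bigl[|\det(H_N-xI)|\,\mathbbm{1}_{A_\delta^c}\bigr] \leq \bigl(\mathbb{E}|\det(H_N-xI)|^2\bigr)^{1/2}\mathbb{P}(A_\delta^c)^{1/2}.
\end{align*}
A crude $\mathbb{E}|\det|^2 \leq e^{CN}$ suffices (this is a standard moment computation for block-Gaussian ensembles via Pfaffian or supersymmetric identities, and the constant need not be sharp). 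For $\mathbb{P}(A_\delta^c)$ we use Gaussian concentration: after subtracting the contribution from eigenvalues within $\epsilon$ of $x$ (handled as in the lower-bound step), the truncated statistic $\int f_\epsilon\, d\mu_N$ is a Lipschitz function of the independent Gaussian entries of $M,M_1$, so Herbst's argument yields $\mathbb{P}(|\!\int f_\epsilon\, d\mu_N - \mathbb{E}\!\int f_\epsilon\, d\mu_N| > \delta/2) \leq e^{-cN\delta^2}$. Choosing $\epsilon = \epsilon(N) \to 0$ slowly and $\delta \to 0$ then concludes.

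\medskip

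\noindent\emph{Main obstacle.} The delicate step is the uniform-in-$N$ control on $\mathbb{E}\mu_N([x-\epsilon,x+\epsilon])$ for the perturbed block ensemble: $H_N$ is not a classical Wigner matrix, and its support may exhibit a spectral gap that opens and closes (cf.\ Figure \ref{fig:spectra}), so the square-root edge behaviour is not automatic and can degenerate where bands touch. It is precisely here that the ``precise sub-leading terms for a limiting spectral density obtained from supersymmetric methods'' enter: fine information on $\rho_{eq}$ near the edges of its support, beyond the leading order, is required to ensure that eigenvalue accumulation near $x$ contributes $o(1)$ to $\mathbb{E}\,Y_N(x)$ for every $x\in\mathbb{R}$.
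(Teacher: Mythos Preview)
Your skeleton matches the paper's: Jensen for the lower bound, truncation of $\log|\cdot - x|$, and Gaussian concentration for the truncated linear statistic. Two points need correction, and one structural difference from the paper is worth noting.

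\medskip
\noindent\textbf{Concentration rate.} The Herbst/Guionnet--Zeitouni bound for a Lipschitz linear spectral statistic of a Gaussian matrix with $\mathcal{O}(N^{-1})$ entry variances is
\[
\mathbb{P}\Bigl(\Bigl|\int f_\epsilon\,d\mu_N - \mathbb{E}\!\int f_\epsilon\,d\mu_N\Bigr| > \delta\Bigr) \;\leq\; \exp\bigl(-c\,N^{2}\delta^2/|f_\epsilon|_{\mathrm{Lip}}^2\bigr),
\]
not $e^{-cN\delta^2}$. The $N^2$ is essential for your upper-bound scheme: your Cauchy--Schwarz step gives $(\mathbb{E}|\det|^2)^{1/2}\,\mathbb{P}(A_\delta^c)^{1/2}\leq e^{CN/2}\,\mathbb{P}(A_\delta^c)^{1/2}$, and only an $e^{-cN^2\delta^2}$ decay drives this to zero for each fixed $\delta>0$, after which one sends $\delta\downarrow 0$. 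With the stated $e^{-cN\delta^2}$ the argument fails. The paper organises the upper bound differently: it splits the \emph{integrand}, writing $\log|x-z|\leq \max(-\alpha,\min(\log|x-z|,\beta)) + \log|x-z|\,\indic\{|x-z|\geq e^\beta\}$, applies Cauchy--Schwarz to the exponential, and bounds the large-$|x-z|$ piece directly by a crude Gaussian moment estimate on $\mathrm{Tr}\,|H_N-xI|$. This avoids any need for $\mathbb{E}|\det|^2\leq e^{CN}$.

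\medskip
\noindent\textbf{What the supersymmetric sub-leading terms actually do.} You describe the key input as ``subleading-order behaviour of $\rho_{eq}$ near its support boundaries'' plus a Wegner-type transfer. That is not what the paper uses, and no edge analysis enters. The paper expands the finite-$N$ supersymmetric integral for $\langle G_N(z)\rangle$ one order beyond the saddle point (the quadratic term integrates to zero by symmetry, so the first correction is cubic) and obtains
\[
\bigl|\mathbb{E}\mu_N(z) - \rho_{eq}(z)\bigr| \;\lesssim\; N^{-5/3}\,C(x_1,|x|+e^{-\alpha})
\]
uniformly for $z$ in a neighbourhood of $x$, with the constant $C$ controlled explicitly through the saddle-point equations \eqref{eq:boson_saddle1}--\eqref{eq:ferm_saddle2}. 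Since $\rho_{eq}$ is bounded, this makes $\mathbb{E}\mu_N$ bounded uniformly in $N$ near $x$, and then
\[
\mathbb{E}\!\int \log|z-x|\,\indic\{|z-x|\leq e^{-\alpha}\}\,d\mu_N(z)\;\longrightarrow\;0\quad(\alpha\to\infty)
\]
uniformly in $N$ follows directly from the local integrability of $\log$. So the ``sub-leading terms'' are corrections in $1/N$ to the mean spectral density, not asymptotics of $\rho_{eq}$ at the edge; their role is simply to deliver a uniform-in-$N$ pointwise bound on $\mathbb{E}\mu_N$ for this nonstandard block ensemble, for which an off-the-shelf Wegner estimate is not available.
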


\begin{proof}
We begin by establishing an upper bound. For any $\alpha,\beta>0$ \begin{align}
    \expect |\det (H_N-xI)| &= \expect\left[\exp\left\{N\int \log|z-x| d\mu_N(z)
    \right\}\right] \notag\\
    & \leq \underbrace{\left(\expect\left[\exp\left\{2N\int \max\left(-\alpha, \min\left(\log|x-z|, \beta\right)\right) d\mu_N(z)\right\}\right]\right)^{1/2}}_{A_N}\notag\\ & ~~~~~~~ \underbrace{\left(\expect\left[\exp\left\{2N\int \log|x-z|\indic\{|x-z|\geq e^{\beta}\}d\mu_N(z)\right\}\right]\right)^{1/2}}_{B_N}.
\end{align}
Considering $B_N$, we have \begin{align}
    \log|x-z| \indic\{|x-z| \geq e^{\beta}\} \leq |x-z|^{1/2}\indic\{|x-z| \geq e^{\beta}\} \leq e^{-\beta/2}|x-z|
\end{align}
and so \begin{align}
  \expect\left[\exp\left\{2N\int \log|x-z|\indic\{|x-z|\geq e^{\beta}\}\right\}\right] &\leq  \expect \left[\exp\left\{2N e^{-\beta/2} \frac{\Tr |H_N - xI|}{N}\right\}\right] \notag\\
   &= \expect \left[\exp\left\{2e^{-\beta/2} \Tr |H_N - xI|\right\}\right].
\end{align}
The entries of $H_N$ are Gaussians with variance $\frac{1}{N}b^2, \frac{1}{2N}b^2, \frac{1}{N}(b^2 + b_1^2)$ or $\frac{1}{2N}(b^2+b_1^2)$ and all the diagonal and upper diagonal entries are independent. All of these variances are $\mathcal{O}(N^{-1})$, so
\begin{align}
    |H_N - x|_{ij} \leq |x| + |x_1| + \mathcal{O}(N^{-1/2})|X_{ij}|
\end{align}
where the $X_{ij}$ are i.i.d. standard Gaussians for $i\leq j$. It follows that \begin{align}
       \expect \left[\exp\left\{2e^{-\frac{\beta}{2}} \Tr |H_N- xI|\right\}\right]   &\leq e^{2e^{-\frac{\beta}{2}}N(|x| + |x_1|)}\expect_{X\sim\mathcal{N}(0,1)} e^{2e^{-\frac{\beta}{2}}\mathcal{O}(N^{1/2})|X|}.
\end{align}
Elementary calculations give \begin{align}
    \expect_{X\sim \mathcal{N}(0,1)} e^{c |X|} \leq \frac{1}{2}\left(e^{-c^2} + e^{c^2}\right) \leq e^{c^2}
\end{align}
and so \begin{align}
      \expect \left[\exp\left\{2e^{-\frac{\beta}{2}} \Tr |H_N- xI|\right\}\right]   &\leq e^{2e^{-\frac{\beta}{2}}N(|x| + |x_1|)} e^{4e^{-\beta} \mathcal{O}(N)} = \exp\left\{2N\left(e^{-\frac{\beta}{2}}(|x| + |x_1|) + e^{-\beta}\mathcal{O}(1)\right)\right\}
\end{align}
thus when we take $\beta\rightarrow \infty$, we have $B_N \leq e^{o(N)}$.

Considering $A_N$, it is sufficient now to show \begin{align}
    \expect \left[\exp\left\{ 2N \int f(z) d\mu_N(z)\right\}\right] = \exp\left\{2N\left(\int f(z) d\mu_{eq}(z) + o(1)\right)\right\}
\end{align}
where $f(z) = 2\max\left(\min(\log|x-z|, \beta), -\alpha\right)$, a continuous and bounded function. For any $\epsilon>0$, we have \begin{align}
    \expect \left[\exp\left\{2N\int f(z) d\mu_N(z)\right\}\right] \leq \exp\left\{2N\left(\int f(z) d\mu_{eq}(z) + \epsilon\right)\right\} + e^{2N||f||_{\infty}}\mathbb{P}\left(\int f(z) d\mu_N(z) \geq \int f(z)d\mu_{eq}(z) + \epsilon\right).
\end{align}
The entries of $H_N$ are Gaussian with $\mathcal{O}(N^{-1})$ variance and so obey a log-Sobolev inequality as required by Theorem 1.5 from \cite{guionnet2000concentration}. The constant, $c$, in the inequality is independent of $N, x, x_1$, so we need not compute it exactly. The theorem from \cite{guionnet2000concentration} then gives
\begin{align}
    \mathbb{P}\left(\int f(z) d\mu_N(z) \geq \int f(z)d\mu_{eq}(z) + \epsilon\right) \leq \exp\left\{-\frac{N^2}{8c}\epsilon^2\right\}.
\end{align}
We have shown \begin{align}
  \expect|\det(H_N - xI)| \leq  A_NB_N \leq \exp\left\{N(1 + o(1))\left(\int f(z)d\mu_{eq}(z)\right)\right\} \leq  \exp\left\{N(1 + o(1))\left(\int \log|x-z|d\mu_{eq}(z)\right)\right\}.\label{eq:lemma_upper_bnd}
\end{align}

We now need to establish a complimentary lower bound to complete the proof. By Jensen's inequality \begin{align}
    \expect |\det(H_N-x)| &\geq \exp\left(N\mathbb{E}\left[\int \log|z-x| d\mu_N(z)\right]\right) \notag\\
    & \geq \exp\left(N\mathbb{E}\left[\int\max\left(-\alpha,  \log|z-x|\right) d\mu_N(z)\right]\right)  \exp\left(N\expect\left[\int \log|z-x| \indic\{|z-x| \leq e^{-\alpha}\}d\mu_N(z)\right]\right)\notag\\
    &\geq \exp\left(N\mathbb{E}\left[\int\min\left(\beta, \max\left(-\alpha,  \log|z-x|\right)\right) d\mu_N(z)\right]\right)  \exp\left(N\expect\left[\int \log|z-x| \indic\{|z-x| \leq e^{-\alpha}\}d\mu_N(z)\right]\right)\label{eq:ldp_lower_bound_2_terms}
\end{align}
for any $\alpha, \beta >0$. Convergence in law of $\mu_N$ to $\mu_{eq}$ and the dominated convergence theorem give \begin{align}
    \exp\left(N\mathbb{E}\left[\int\min\left(\beta, \max\left(-\alpha,  \log|z-x|\right)\right) d\mu_N(z)\right]\right) \geq \exp\left\{N \left(\int \log|x-z| d\mu_{eq}(z) + o(1)\right)\right\}
\end{align}
for large enough $\beta$, because $\mu_{eq}$ has compact support. It remains to show that the expectation inside the exponent in the second term of (\ref{eq:ldp_lower_bound_2_terms}) converges to zero uniformly in $N$ in the limit $\alpha \rightarrow \infty$. \\

By (\ref{eq:stieljes}), it is sufficient to consider $\langle G_N(z)\rangle$, which is computed via (\ref{eq:lsd_pre_saddle}). Let us define the function $\Psi$ so that \begin{align}
    \langle G_N(z) \rangle = \frac{2}{b^2} \int d\sigma d\sigma[1] (z-i\sigma_{FF}) e^{-N\Psi(\sigma, \sigma[1])}.
\end{align}
Henceforth, $\sigma_{FF}^*, \sigma_{FF}[1]^*, \sigma_{BB}^*, \sigma_{BB}[1]^*$ are the solution to the saddle point equations (\ref{eq:boson_saddle1}-\ref{eq:ferm_saddle2}) and  $\tilde{\sigma}_{FF}, \tilde{\sigma}_{FF}[1], \tilde{\sigma}_{BB}, \tilde{\sigma}_{BB}[1]$ are integration variables. Around the saddle point \begin{align}
    z - i\sigma_{FF} = z - i\sigma_{FF}^* - iN^{-\frac{1}{r}}\tilde{\sigma}_{FF}
\end{align}
for some $r\geq 2$. We use the notation $\vec{\sigma}$ for $(\sigma_{BB}, \sigma_{BB}[1], \sigma_{FF}, \sigma_{FF}[1])$ and similarly $\vec{\sigma}_{BB}, \vec{\sigma}_{FF}$. A superscript asterisk on $\Psi$ or any of its derivatives is short hand for evaluation at the saddle point. While the Hessian of $\Psi$ may not in general vanish at the saddle point, \begin{align}
    \int d\tilde{\sigma}d\tilde{\sigma}[1] \tilde{\sigma}_{FF} e^{-N \tilde{\vec{\sigma}}^T \nabla^2 \Psi^* \tilde{\vec{\sigma}}} = 0
\end{align}
and so we must go to at least the cubic term in the expansion of $\Psi$ around the saddle point, i.e. \begin{align}
      \langle G_N(z) \rangle = G(z) -  \frac{2i}{b^2 N^{5/3}}\underbrace{\int_{-\infty}^{\infty} d\tilde{\vec{\sigma}}_{BB} d\tilde{\vec{\sigma}}_{FF} \tilde{\sigma}_{FF} e^{-\frac{1}{6} \tilde{\sigma}^i \tilde{\sigma}^j \tilde{\sigma}^k \partial_{ijk}\Psi^*}}_{E(z; x_1)} + \text{ exponentially smaller terms}.
\end{align}
The bosonic (BB) and fermionic (FF) coordinates do not interact, so we can consider derivatives of $\Phi$ as block tensors. Simple differentiation gives \begin{align}
    (\nabla\Psi)_B &= \left(\begin{array}{c}
        \frac{2}{b^2}\sigma_{BB} - \kappa\left(\sigma_{BB} + \sigma_{BB}[1] + z + x_1\right)^{-1} - \kappa'\left(\sigma_{BB} + z\right)^{-1}  \\
        \frac{2}{b_1^2}\sigma_{BB}[1] - \left(\sigma_{BB} + \sigma_{BB}[1] + z + x_1\right)^{-1}\end{array}\right)\notag\\
   \implies  (\nabla^2\Psi)_B &= \left(\begin{array}{cc}
       \kappa\left(\sigma_{BB} + \sigma_{BB}[1] + z + x_1\right)^{-2} + \kappa'\left(\sigma_{BB} + z\right)^{-2}  &
       \kappa\left(\sigma_{BB} + \sigma_{BB}[1] + z + x_1\right)^{-2}\\
     \left(\sigma_{BB} + \sigma_{BB}[1] + z + x_1\right)^{-2} &
      \left(\sigma_{BB} + \sigma_{BB}[1] + z + x_1\right)^{-2}
       \end{array}\right)\\
 \implies (\nabla^3\Psi)_B^* &= \left(\left(\begin{array}{cc}
      A_B\kappa + B_B\kappa' & A_B\kappa \\
      A_B &  A_B
 \end{array}\right), A_B\left(\begin{array}{cc}
      \kappa & \kappa \\
      1 &  1
 \end{array}\right) \right),     
\end{align}
where \begin{align}
    A_B = -\frac{2}{\left(\sigma_{BB}^* + \sigma_{BB}^*[1] + z + x_1\right)^3}, ~~~ B_B= -\frac{2}{\left(\sigma_{BB}^* + z\right)^3}.
\end{align}
$(\nabla^3\Psi)_F^*$ follows similarly with \begin{align}
        A_F = -\frac{2}{\left(\sigma_{FF}^* + \sigma_{FF}^*[1] - iz - ix_1\right)^3}, ~~~ B_F= -\frac{2}{\left(\sigma_{FF}^* - iz\right)^3}.
\end{align}
By the saddle point equations (\ref{eq:boson_saddle1})-(\ref{eq:ferm_saddle2}) we have \begin{align}
    A_B &= 2(\sigma_{BB}[1]^*)^3, ~~ B_B = \frac{2}{(\kappa')^3}\left(\frac{2\kappa}{b_1^2}\sigma_{BB}[1]^* - \frac{2}{b^2} \sigma_{BB}^*\right)^3\label{eq:ab_bb_final}\\
        A_F &= 2(\sigma_{FF}[1]^*)^3, ~~ B_F = \frac{2}{(\kappa')^3}\left(\frac{2\kappa}{b_1^2}\sigma_{FF}[1]^* - \frac{2}{b^2} \sigma_{FF}^*\right)^3.\label{eq:af_bf_final}
\end{align}
Let $\xi_1= \tilde{\sigma}_{BB}, \xi_2 =\tilde{\sigma}_{BB}[1]$. Then \begin{align}
   ( \tilde{\sigma}^i \tilde{\sigma}^j \tilde{\sigma}^k \partial_{ijk}\Phi^*)_B &= \left(A_B\kappa + B_B\kappa'\right)\xi_1^3 + A_B(2\kappa +1 ) \xi_1^2\xi_2[1] + A_B(\kappa +2 ) \xi_1\xi_2^2 + A_B\xi_2^3\notag\\
   &= A_B\left[\xi_2^3 + (2\kappa +1 )\xi_2\xi_1^2 +(2+ \kappa)\xi_1\xi_2^2 + C\xi_1^3\right] + \left(B_B\kappa' + A_B\kappa - CA_B\right)\xi_1^3
\end{align}
for any $C$. Let $\xi_1 = a_1\xi_1'$ and then choose $C = a_1^{-3}$ and $a_1 = (2+\kappa)(2\kappa + 1)^{-1}$ to give \begin{align}
     ( \tilde{\sigma}^i \tilde{\sigma}^j \tilde{\sigma}^k \partial_{ijk}\Phi^*)_B &= A_B(\xi_1' + \xi_2)^3 + (B_B\kappa' + A_B\kappa - CA_B)a_1^3(\xi_1')^3 \equiv A_B\eta^3 + D_B\xi^3
\end{align}
with $\eta = \xi_1' + \xi_2$, $\xi=\xi_1'$, $D_B=B_B\kappa' + A_B\kappa - a_1^{-3}A_B$. The expressions for $ ( \tilde{\sigma}^i \tilde{\sigma}^j \tilde{\sigma}^k \partial_{ijk}\Phi^*)_F$ follow identically. We thus have \begin{align}
    E(z;x_1) \propto \left(\int_0^{\infty} d\xi ~ \xi \int_{\xi}^{\infty}d\eta ~ e^{A_F\eta^3 + D_F\xi^3}\right)\left(\int_0^{\infty} d\xi ~  \int_{\xi}^{\infty}d\eta ~ e^{A_B\eta^3 + D_B\xi^3}\right)
\end{align} 
or perhaps with the the integration ranges reversed depending on the signs of $\Re A_F, \Re A_B, \Re D_F, \Re D_B$. We have \begin{align}
    |E(z;  x_1)| & \leq  \left|\int_0^{\infty} d\xi ~ \xi \int_{\xi}^{\infty}d\eta ~ e^{A_F\eta^3 + D_F\xi^3}\right|\cdot\left|\int_0^{\infty} d\xi ~  \int_{\xi}^{\infty}d\eta ~ e^{A_B\eta^3 + D_B\xi^3}\right|\notag\\
    & \leq  \int_0^{\infty} d\xi ~ \xi \int_{\xi}^{\infty}d\eta ~| e^{A_F\eta^3 + D_F\xi^3}|\cdot\int_0^{\infty} d\xi ~  \int_{\xi}^{\infty}d\eta ~| e^{A_B\eta^3 + D_B\xi^3}|\notag\\
    & \leq  \int_0^{\infty} d\xi ~ \xi \int_{0}^{\infty}d\eta ~| e^{A_F\eta^3 + D_F\xi^3}|\cdot\int_0^{\infty} d\xi ~  \int_{0}^{\infty}d\eta ~| e^{A_B\eta^3 + D_B\xi^3}|\notag\\
    & \leq \left(|\mathfrak{M} D_F|\right)^{-2/3}\left(|\mathfrak{M} A_F|\right)^{-1/3}
    \left(|\mathfrak{M} D_B|\right)^{-1/3}
    \left(|\mathfrak{M} A_B|\right)^{-1/3}\left(\int_0^{\infty} e^{-\xi^3}d\xi\right)^3 \left(\int_0^{\infty}~ \xi e^{-\xi^3}d\xi\right)\label{eq:esd_error_bound}
\end{align}
where we have defined \begin{align}
    \mathfrak{M} y = \begin{cases} \Re y ~~ &\text{if } \Re y \neq 0, \\
    \Im y ~~ &\text{if } \Re y = 0. \\
    \end{cases}
\end{align}
This last bound follows from a standard Cauchy rotation of integration contour if any of $D_F, A_F, D_B, A_B$ has vanishing real part.
(\ref{eq:esd_error_bound}) is valid for $D_B, A_B, D_F, A_F \neq 0$, but if $D_B=0$ and $A_B\neq 0$, then the preceding calculations are simplified and we still obtain an upper bound but proportional to $(|\mathfrak{M} A_B|)^{-1/3}$. Similarly with $A_B=0$ and $D_B\neq 0$ and similarly for $A_F, D_F$. The only remaining cases are $A_B = D_B =0$ or $A_F = D_F =0$. But recall (\ref{eq:af_bf_final}) and (\ref{eq:ferm_saddle1})-(\ref{eq:ferm_saddle2}). We immediately see that $A_F=D_F$ if and only if $\sigma_{FF}=\sigma_{FF}[1]=0$, which occurs for no finite $z, x_1$. Therefore, for \emph{fixed} $(x, x_1)\in\mathbb{R}^2$,  $\alpha > 0$ and any $z\in (x-e^{-\alpha}, x + e^{-\alpha})$ \begin{align}
    |\mathbb{E}\mu_N(z) - \mu_{eq}(z; x_1) | \lesssim N^{-5/3} C(x_1, |x| + e^{-\alpha})
\end{align}
where $C(|x_1|, |x| + e^{-\alpha})$ is positive and is decreasing in $\alpha$. Since $\mu_{eq}$ is bounded, it follows that $\mathbb{E}\mu_N$ is bounded, and therefore \begin{align}
 \mathbb{E}   \int \log|z-x| \indic\{|z-x| \leq e^{-\alpha}\} d\mu_N(z) \rightarrow 0
\end{align}
as $\alpha\rightarrow\infty$ uniformly in $N$, and so the lower bound is completed.
\end{proof}

Equipped with this result, we can now prove the legitimacy of the Coulomb gas approximation in our complexity calculation. The proof will require an elementary intermediate result which has undoubtedly appeared in various places before, but we prove it here anyway for the avoidance of doubt.

\begin{lemma}\label{lem:max_entry_bound}
Let $M_N$ be a random $N\times N$ symmetric real matrix with independent centred Gaussian upper-diagonal and diagonal entries. Suppose that the variances of the entries are bounded above by $cN^{-1}$ for some constant $c>0$. Then there exists some constant $c_e$ such that \begin{align}
    \expect ||M_N||_{\text{max}}^N \lesssim e^{c_eN}.
\end{align}
\end{lemma}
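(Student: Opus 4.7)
The plan is to bound the maximum by the sum of $N$-th powers and then apply standard Gaussian moment estimates entry by entry. Concretely, since $\|M_N\|_{\max}^N = \max_{i\le j}|(M_N)_{ij}|^N$, I would write
\begin{align*}
\expect\|M_N\|_{\max}^N \;\le\; \sum_{i\le j}\expect|(M_N)_{ij}|^N
\end{align*}
which reduces the problem to controlling the $N$-th absolute moment of a single centred Gaussian of variance at most $cN^{-1}$.

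For $(M_N)_{ij}\sim\mathcal{N}(0,\sigma_{ij}^2)$ with $\sigma_{ij}^2\le cN^{-1}$, I would use the explicit formula
\begin{align*}
\expect|(M_N)_{ij}|^N = \sigma_{ij}^N\cdot\frac{2^{N/2}\Gamma\!\left(\tfrac{N+1}{2}\right)}{\sqrt{\pi}}
\end{align*}
and invoke Stirling's formula to obtain $\Gamma((N+1)/2)\sim \mathrm{poly}(N)\cdot (N/(2e))^{N/2}$, giving
\begin{align*}
\expect|(M_N)_{ij}|^N \;\lesssim\; \mathrm{poly}(N)\cdot\left(\frac{c}{N}\right)^{\!N/2}\!\left(\frac{N}{e}\right)^{\!N/2} = \mathrm{poly}(N)\cdot\left(\frac{c}{e}\right)^{\!N/2}.
\end{align*}

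Since the number of independent entries is at most $N(N+1)/2$, which contributes only a polynomial prefactor, summing over $i\le j$ yields
\begin{align*}
\expect\|M_N\|_{\max}^N \;\lesssim\; N^2\cdot\mathrm{poly}(N)\cdot\left(\tfrac{c}{e}\right)^{N/2},
\end{align*}
and absorbing all polynomial factors into an exponential gives the claim for any $c_e > \tfrac{1}{2}\log(c/e)$ (and trivially for some positive $c_e$ in general). The proof is routine; there is no real obstacle, the only minor care needed is making sure the polynomial prefactors from Stirling and from the $O(N^2)$ number of entries are genuinely subexponential so that they can be absorbed into $e^{c_e N}$.
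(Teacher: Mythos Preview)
Your proposal is correct and follows essentially the same route as the paper: bound the maximum by the sum of $N$-th powers over the entries, use the exact Gaussian absolute-moment formula together with Stirling so that the $N^{-N/2}$ from the variance bound cancels the $N^{N/2}$ from the moment, and absorb the remaining polynomial factors into the exponential. The only cosmetic differences are that you sum over $i\le j$ rather than all $i,j$ and you track the $e^{-N/2}$ from Stirling, yielding the slightly sharper constant $c_e>\tfrac{1}{2}\log(c/e)$.
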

\begin{proof}
Let $\sigma_{ij}^2$ denote the variance of $M_{ij}$. Then
\begin{align}
    \mathbb{E}||M||_{max}^N &\leq \sum_{i,j} \expect|M_{i,j}|^N\notag\\
    &= \sum_{i,j} \expect |\mathcal{N}(0, \sigma_{ij}^2)|^N\notag\\
    &= \sum_{i,j} \sigma_{ij}^N \mathbb{E}|\mathcal{N}(0,1)|^N\notag\\
    &\leq N^2c^{N/2} N^{-N/2} \expect |\mathcal{N}(0,1)|^N.
\end{align}
Simple integration with a change of variables gives \begin{align}
    \expect |\mathcal{N}(0,1)|^N &= 2^{\frac{N+1}{2}}\Gamma\left(\frac{N+1}{2}\right)
\end{align}
and then, for large enough $N$, Stirling's formula gives \begin{align}
      \expect |\mathcal{N}(0,1)|^N &\sim 2^{\frac{N+1}{2}} \sqrtsign{\pi(N+1)} \left(\frac{N+1}{2e}\right)^{\frac{N-1}{2}}\notag\\
     & \sim 2\sqrtsign{\pi} e^{-\frac{N-1}{2}} N^{N/2} \left(\frac{N+1}{N}\right)^{N/2}\notag\\
      &\sim 2\sqrtsign{\pi e} N^{N/2}.
\end{align}
So finally \begin{align}
     \mathbb{E}||M||_{max}^N &\lesssim N^2c^{N/2} = e^{\frac{1}{2}N\log{c}+ 2\log{N}} \leq e^{\left(\frac{1}{2}\log{c} + 2\right)N},
\end{align}
so defining $c_e = \frac{1}{2}\log{2} + 2$ gives the result.
\end{proof}

\begin{theorem}
For any $x_1\in\mathbb{R}$, let $H_N$ be a random $N\times N$ matrix distributed as in the statement of Lemma \ref{lemma:coulomb_det}. Then as $N\rightarrow \infty$
 \begin{align}
&\iint_B dxdx_1 ~ \exp\left\{-N\left(\frac{1}{2s^2}x^2 + \frac{1}{2s_1^2} (x_1)^2\right)\right\}\mathbb{E}|\det(H_N(x_1) - x)|\notag\\
= 
&\iint_B dxdx_1 ~ \exp\left\{-N\left(\frac{1}{2s^2}x^2 + \frac{1}{2s_1^2} (x_1)^2 - \int\log|z - x| d\mu_{eq}(z) + o(1) \right)\right\} +o(1).
 \end{align}
\end{theorem}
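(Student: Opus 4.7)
The plan is to split the integration region $B$ into a compact piece $B_R = B\cap\{|x|,|x_1|\leq R\}$ and its unbounded tail, and to treat these two pieces by rather different arguments. On $B_R$ a uniform version of Lemma \ref{lemma:coulomb_det} will let us replace $\mathbb{E}|\det(H_N-xI)|$ by $\exp\{N\int\log|z-x|\,d\mu_{eq}(z)(1+o(1))\}$ with an $o(1)$ independent of $(x,x_1)$. On $B\setminus B_R$ the crude deterministic bounds provided by Hadamard's inequality combined with Lemma \ref{lem:max_entry_bound} will show that the Gaussian prefactor $\exp\{-N(x^2/(2s^2)+x_1^2/(2s_1^2))\}$ overwhelms the determinant; taking $R$ large enough the tail contribution becomes $o(1)$ uniformly in $N$.

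For the tail, write $H_N - xI = \bar H_N - xI - x_1\bigl(\begin{smallmatrix} I & 0\\ 0 & 0\end{smallmatrix}\bigr)$ with $\bar H_N = bM + b_1\bigl(\begin{smallmatrix} M_1 & 0\\ 0 & 0\end{smallmatrix}\bigr)$ the purely Gaussian piece, whose entries have variances $O(N^{-1})$. Hadamard's inequality gives $|\det(H_N-xI)|\leq N^{N/2}\|H_N-xI\|_{\max}^N \leq N^{N/2}(\|\bar H_N\|_{\max}+|x|+|x_1|)^N$. Expanding binomially, Hölder's inequality and Lemma \ref{lem:max_entry_bound} bound each moment $\mathbb{E}\|\bar H_N\|_{\max}^k$ by $e^{c_e k}$ up to polynomial factors, so
\begin{align*}
\mathbb{E}|\det(H_N-xI)| \;\lesssim\; N^{N/2}\bigl(e^{c_e}+|x|+|x_1|\bigr)^{N}.
\end{align*}
Multiplying by the Gaussian weight and taking logarithms, the exponent is
$\tfrac{1}{2}\log N + \log(e^{c_e}+|x|+|x_1|) - \tfrac{x^2}{2s^2} - \tfrac{x_1^2}{2s_1^2}$ per unit of $N$,
which tends to $-\infty$ quadratically as $|x|+|x_1|\to\infty$. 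Hence $\iint_{B\setminus B_R}$ is bounded by $e^{-c N R^2}$ for $R$ large and is therefore $o(1)$.

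On the compact region $B_R$, I would show that
\begin{align*}
N^{-1}\log\mathbb{E}|\det(H_N(x_1)-xI)| \longrightarrow \int\log|z-x|\,d\mu_{eq}(z;x_1)
\end{align*}
uniformly in $(x,x_1)\in B_R$. This is obtained by inspecting the proof of Lemma \ref{lemma:coulomb_det}: the upper-bound control of $B_N$ depends on $(x,x_1)$ only through $|x|+|x_1|$ and is therefore uniform on $B_R$; the log-Sobolev constant $c$ in the concentration step is independent of $N,x,x_1$; the truncated function $f$ has $L^\infty$-norm bounded in terms of $R$ only; and the lower-bound argument uses uniform tail control of $\log|z-x|$ near $z=x$ via the supersymmetric Stieltjes-transform bound $|\mathbb{E}\mu_N(z)-\mu_{eq}(z;x_1)|\lesssim N^{-5/3}C(|x_1|,|x|+e^{-\alpha})$, whose constant $C$ is continuous and hence bounded on $B_R$. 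A compactness argument (together with continuity of the limit in $(x,x_1)$, which follows from the explicit quartic (\ref{eq:master_quartic})) upgrades pointwise to uniform convergence. Combining the two pieces yields the claimed identity.

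The main obstacle is the uniformity statement on $B_R$ in Lemma \ref{lemma:coulomb_det}. The concentration part is clean, but the lower bound requires showing that the coefficients $A_B,A_F,D_B,D_F$ from the saddle-point analysis of $E(z;x_1)$ do not simultaneously vanish as $(x,x_1)$ ranges over $B_R$; this is what prevents a degenerate saddle and preserves the $N^{-5/3}$ rate uniformly. Once established, the statement follows by exchanging $\lim_N$ and $\iint_{B_R}$ (dominated convergence using the same crude bound from the tail argument) and combining with the tail estimate.
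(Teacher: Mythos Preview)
Your proposal follows exactly the same skeleton as the paper's proof: split $B$ into a bounded box $B_R$ and its complement, invoke Lemma~\ref{lemma:coulomb_det} uniformly on $B_R$, and kill the tail with a crude bound on the determinant against the Gaussian weight. Your discussion of why the $o(1)$ in Lemma~\ref{lemma:coulomb_det} is uniform on $B_R$ is more explicit than the paper's one-line assertion and is correct in spirit.

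The tail argument, however, has a genuine gap. Your Hadamard bound $|\det(H_N-xI)|\leq N^{N/2}\|H_N-xI\|_{\max}^N$ carries an extra $N^{N/2}$, so that after taking expectations your own exponent per unit of $N$ is
\[
\tfrac12\log N+\log\bigl(e^{c_e}+|x|+|x_1|\bigr)-\tfrac{x^2}{2s^2}-\tfrac{x_1^2}{2s_1^2}.
\]
This does go to $-\infty$ in $|x|+|x_1|$, but for any \emph{fixed} $R$ the term $\tfrac12\log N$ diverges with $N$, so the tail integral is \emph{not} bounded by $e^{-cNR^2}$ uniformly in $N$; your bound actually blows up. The paper's crude estimate avoids this precisely because its bound on $\mathbb{E}|\det(H_N-xI)|$ is of the form $e^{c_eN}$ times a function of $(|x|,|x_1|)$ only, so the per-$N$ exponent is $N$-independent and a fixed $R$ suffices to make the tail $o(1)$.

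The fix is easy and keeps your structure intact: either replace Hadamard by the operator-norm bound $|\det A|\leq\|A\|_{\mathrm{op}}^N$ together with concentration of the largest eigenvalue of the centred part $\bar H_N$ (which is $O(1)$ with sub-Gaussian tails at scale $N$), or apply Hadamard row-wise more carefully, using that $\sum_j(\bar H_N)_{ij}^2$ concentrates around an $O(1)$ quantity rather than the wasteful estimate $N\|\bar H_N\|_{\max}^2$. Either route removes the spurious $N^{N/2}$ and your argument then goes through exactly as the paper's does.
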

\begin{proof}
Let $R > 0$ be some constant, independent of $N$. Introduce the notation $B_{\leq R} = B\cap \{\vec{z}\in\mathbb{R}^2 \mid |z|\leq R\}$, and then \begin{align}
    &\Bigg|\iint_B dxdx_1 ~ \exp\left\{-N\left(\frac{1}{2s^2}x^2 + \frac{1}{2s_1^2} (x_1)^2\right)\right\}\mathbb{E}|\det(H_N(x_1) - x)|\notag\\
    &- \iint_{B_{\leq R}} dxdx_1 ~ \exp\left\{-N\left(\frac{1}{2s^2}x^2 + \frac{1}{2s_1^2} (x_1)^2\right)\right\}\mathbb{E}|\det(H_N(x_1) - x)|\Bigg|\notag\\
    \leq & \iint_{||\vec{x}||\geq R} dxdx_1 ~ \exp\left\{-N\left(\frac{1}{2s^2}x^2 + \frac{1}{2s_1^2} (x_1)^2\right)\right\}\mathbb{E}|\det(H_N(x_1) - x)|.\label{eq:conc_integral_bound1}
\end{align}

We have the upper bound (\ref{eq:lemma_upper_bnd}) of Lemma \ref{lemma:coulomb_det} but this cannot be directly applied to (\ref{eq:conc_integral_bound1}) since the bound relies on uniformity in $x, x_1$ which can only be established for bounded $x, x_1$. We use a much cruder bound instead. First, let \begin{align}
J_N = H_N + x_1 \left(\begin{array}{cc}
     I & 0 \\
     0 & 0
\end{array}\right)
\end{align} and then \begin{align}
    |\det\left(H_N - xI\right)| \leq  ||J_N||_{\text{max}}^N \max\{|x|, |x_1|\}^N =   ||J_N||_{\text{max}}^N \exp\left(N\max\{\log|x|, \log|x_1|\}\right).
\end{align}
$J_N$ has centred Gaussian entries with variance $\mathcal{O}(N^{-1})$, so Lemma \ref{lem:max_entry_bound} applies, and we find 
\begin{align}
    \expect  |\det\left(H_N - xI\right)| \lesssim  \exp\left(N\max\{\log|x|, \log|x_1|\}\right) e^{c_e N}
\end{align}
for some constant $c_e>0$ which is independent of $x, x_1$ and $N$, but we need not compute it.

Now we have \begin{align}
       &\Bigg|\iint_B dxdx_1 ~ \exp\left\{-N\left(\frac{1}{2s^2}x^2 + \frac{1}{2s_1^2} (x_1)^2\right)\right\}\mathbb{E}|\det(H_N(x_1) - x)|\notag\\
    &- \iint_{B_{\leq R}} dxdx_1 ~ \exp\left\{-N\left(\frac{1}{2s^2}x^2 + \frac{1}{2s_1^2} (x_1)^2\right)\right\}\mathbb{E}|\det(H_N(x_1) - x)|\Bigg|\notag\\
    \lesssim & \iint_{||\vec{x}||\geq R} dxdx_1 ~ \exp\left\{-N\left(\frac{1}{2s^2}x^2 + \frac{1}{2s_1^2} (x_1)^2 - \max\{\log|x|, \log|x_1|\} - c_e\right)\right\}.
\end{align}
But, since $\mu_{eq}$ is bounded and has compact support, we can choose $R$ large enough (independent of $N$) so that \begin{align}
  \frac{1}{2s^2}x^2 + \frac{1}{2s_1^2} (x_1)^2 - \max\{\log|x|, \log|x_1|\} - c_e> L > 0\label{eq:theorem_gaussian_bound}
\end{align}
for all $(x, x_1)$ with $\sqrtsign{x^2 + x_1^2} > R$ and for some fixed $L$ independent of $N$.  Whence  \begin{align}
       &\Bigg|\iint_B dxdx_1 ~ \exp\left\{-N\left(\frac{1}{2s^2}x^2 + \frac{1}{2s_1^2} (x_1)^2\right)\right\}\mathbb{E}|\det(H_N(x_1) - x)|\notag\\
    &- \iint_{B_{\leq R}} dxdx_1 ~ \exp\left\{-N\left(\frac{1}{2s^2}x^2 + \frac{1}{2s_1^2} (x_1)^2\right)\right\}\mathbb{E}|\det(H_N(x_1) - x)|\Bigg|\notag\\
    \lesssim & N^{-1}e^{-NL} \rightarrow 0\label{eq:thm_final1}
\end{align}
as $N\rightarrow\infty$. Finally, for $x, x_1$ in $B_{\leq R}$, the result of the Lemma \ref{lemma:coulomb_det} holds uniformly in $x, x_1$, so \begin{align}
    &\iint_{B_{\leq R}} dxdx_1 ~ \exp\left\{-N\left(\frac{1}{2s^2}x^2 + \frac{1}{2s_1^2} (x_1)^2\right)\right\}\mathbb{E}|\det(H_N(x_1) - x)|\notag\\ =  &\iint_{B_{\leq R}} dxdx_1 ~ \exp\left\{-N\left(\frac{1}{2s^2}x^2 + \frac{1}{2s_1^2} (x_1)^2 - \int\log|z-x| d\mu_{eq}(z; x_1)+ o(1)\right)\right\}.\label{eq:thm_final2}
\end{align}
The result follows from (\ref{eq:thm_final1}), (\ref{eq:thm_final2}) and the triangle inequality.
\end{proof}

\subsection{Asymptotic complexity with prescribed Hessian index}
Recall the complexity defined in (\ref{eq:C_Nkk_def}): \begin{align}
    C_{N, k_D, k_G} =\Bigg|\Bigg\{ \vwD\in S^{N_D  }, \vwG\in S^{N_G } ~:~ &\nabla_D \LD = 0, \nabla_G \LG = 0, \LD\in B_D, \LG\in B_G\notag\\
    &i(\nabla_D^2 L^{(D)}) = k_D, ~i(\nabla_G^2 L^{(G)}) = k_G \Bigg\}\Bigg|.\tag{\ref{eq:C_Nkk_def}}
\end{align}

The extra Hessian signature conditions in (\ref{eq:C_Nkk_def}) enforce that both generator and discriminator are at low-index saddle points. Our method for computing the complexity $C_N$ in the previous subsection relies on the Coulomb gas approximation applied to the spectrum of $H'$. However, the Hessian index constraints are formulated in the natural Hessian matrix (\ref{eq:H_first}), but our spectral calculations proceed from the rewritten form (\ref{eq:H_rewritten}). We find however that we can indeed proceed much as in \cite{baskerville2020loss}. Recall the key Hessian matrix $\tilde{H}$ given in (\ref{eq:H_first}) by \begin{align}
   \tilde{H}=  \left(\begin{array}{cc}
      \sqrtsign{2(N_D - 1)}\sqrtsign{b^2 + b_1^2}M^{(D)} & -bG \\
       b G^T & \sqrtsign{2(N_G-1)}bM^{(G)}
    \end{array}\right)
   -\sqrtsign{N-2}x \left(\begin{array}{cc}
        -I_{N_D} & 0  \\
         0 & I_{N_G} 
    \end{array}\right)  + \sqrtsign{N-2}x_1\left(\begin{array}{cc}
        I_{N_D} & 0  \\
         0 & 0
    \end{array}\right)
\end{align}
where $M^{(D)}\sim GOE^{N_D -1}$, $M^{(G)}\sim GOE^{N_G-1}$, $G$ is $N_D - 1 \times N_G - 1$ Ginibre, and all are independent. Note that we have used (\ref{eq:x_x1_def}) to slightly rewrite (\ref{eq:H_first}).
We must address the problem of computing \begin{align}
    \mathbb{E}|\det \tilde{H}|\mathbbm{1}\left\{ i\left(\sqrtsign{\kappa}(1 + \mathcal{O}(N^{-1}))\sqrtsign{b^2 + b_1^2}M_D + \frac{x+x_1}{\sqrtsign{2}}\right) = k_D, ~ i\left(\sqrtsign{\kappa'}(1 + \mathcal{O}(N^{-1}))bM_G - \frac{x}{\sqrtsign{2}}\right) = k_G\right\}.
\end{align}

Indeed, we introduce integration variables $\vec{y}_1, \vec{y}_2, \zeta_1, \zeta_1^*, \zeta_2,\zeta_2^*$, being $(N-2)$-vectors of commuting and anti-commuting variables respectively. Use $[t]$ notation to split all vectors into the first $\kappa N -1$ and last $\kappa'N-1$ components. Let \begin{align}
    A[t] = \vec{y}_1\vec{y}_1^T + \vec{y}_2\vec{y}_2^T + \zeta_1\zeta_1^{\dagger} + \zeta_2\zeta_2^{\dagger}.
\end{align}

With these definitions, we have \cite{baskerville2020loss} \begin{align}
    |\det \tilde{H}| = (2(N-2))^{\frac{N-2}{2}} \lim_{\epsilon\searrow 0} \int d\Xi &\exp\left\{-i\sqrtsign{\kappa}(1 + \mathcal{O}(N^{-1})) \sqrtsign{b^2 + b_1^2}\Tr M^{(D)} A[1]-i\sqrtsign{\kappa'}(1 + \mathcal{O}(N^{-1})) b \Tr M^{(G)} A[2] \right\}\notag\\
    &\exp\{ \mathcal{O}(\epsilon)\}\exp\{\ldots\}
\end{align}
where $d\Xi$ is the normalised measure of the $\vec{y}_1, \vec{y}_2, \zeta_1, \zeta_1^*, \zeta_2,\zeta_2^*$ and the ellipsis represents terms with no dependence on $M^{(D)}$ or $M^{(G)}$, which we need not write down.
The crux of the matter is that we must compute \begin{align}
   & \mathbb{E}_{M^{(D)}}e^{-i\sqrtsign{\kappa} \sqrtsign{b^2 + b_1^2}\Tr M^{(D)} A[1]}\mathbbm{1}\left\{i\left(M_D + \frac{x + x_1}{\sqrtsign{\kappa}\sqrtsign{b^2 + b_1^2}}(1 + \mathcal{O}(N^{-1}))\right) = k_D\right\},\\
      & \mathbb{E}_{M^{(G)}}e^{-i\sqrtsign{\kappa'} b \Tr M^{(G)} A[2]}\mathbbm{1}\left\{i\left(M_G - \frac{x}{\sqrtsign{\kappa'}b}(1 + \mathcal{O}(N^{-1}))\right) = k_G\right\},
\end{align}
but \cite{baskerville2020loss} has performed exactly these calculations (see around (5.146) therein) and so there exist constants $K^{(D)}_U,K^{(D)}_L, K^{(G)}_U,K^{(G)}_L$ such that \begin{align}
  &K^{(D)}_L e^{-Nk_D\kappa(1 + o(1)) I_1(\hat{x}_D; \sqrtsign{2})} e^{-\frac{1}{2N}(b^2 + b_1^2)\Tr A[1]^2}  \notag\\
  \leq  &\Re\mathbb{E}_{M^{(D)}}e^{-i\sqrtsign{\kappa} \sqrtsign{b^2 + b_1^2}\Tr M^{(D)} A[1]}\mathbbm{1}\left\{i\left(M_D + \frac{x + x_1}{\sqrtsign{\kappa}\sqrtsign{b^2 + b_1^2}}(1 + \mathcal{O}(N^{-1}))\right) = k_D\right\} \notag\\
  \leq &K^{(D)}_U e^{-Nk_D\kappa(1 + o(1)) I_1(\hat{x}_D; \sqrtsign{2})} e^{-\frac{1}{2N}(b^2 + b_1^2)\Tr A[1]^2}
\end{align}
and 
\begin{align}
  &K^{(G)}_L e^{-Nk_G\kappa'(1 + o(1)) I_1(\hat{x}_G; \sqrtsign{2})} e^{-\frac{1}{2N}b^2 \Tr A[2]^2}  \notag\\
  \leq  &\Re\mathbb{E}_{M^{(G)}}e^{-i\sqrtsign{\kappa'} b\Tr M^{(G)} A[2]}\mathbbm{1}\left\{i\left(M_G - \frac{x}{\sqrtsign{\kappa'}b}(1 + \mathcal{O}(N^{-1}))\right) = k_G\right\} \notag\\
  \leq &K^{(G)}_U e^{-Nk_G\kappa'(1 + o(1)) I_1(\hat{x}_G; \sqrtsign{2})} e^{-\frac{1}{2N}b^2 \Tr A[2]^2}
\end{align}
where \begin{align}
    \hat{x}_D = -\frac{x + x_1}{\sqrtsign{\kappa}\sqrtsign{b^2 + b_1^2}}, ~~ \hat{x}_G = \frac{x}{\sqrtsign{\kappa'}b}.
\end{align}
Here $I_1$ is the rate function of the largest eigenvalue of the GOE as obtained in \cite{arous2001aging} and used in \cite{auffinger2013random, baskerville2020loss}: \begin{align}
    I_1(u; E) = \begin{cases} \frac{2}{E^2}\int_u^{-E} \sqrtsign{z^2 - E^2}dz ~ &\text{ for } u < -E,\\
        \frac{2}{E^2}\int_{E}^u \sqrtsign{z^2 - E^2}dz ~ &\text{ for } u > E,\\
    \infty &\text{ for } |u| < E.
    \end{cases}
\end{align}
Note that for $u< -E$ \begin{align}
    I_1(u; E) = -\frac{u}{E}\sqrtsign{u^2 - E^2} - \log\left(-u + \sqrtsign{u^2 - E^2}\right) + \log{E}
\end{align}
and for $u>E$ we simply have $I_1(u; E) = I_1(-u; E)$. Note also that $I_1(ru; E) = I_1(u, E/r).$

We have successfully dealt with the Hessian index indicators inside the expectation, however we need some way of returning to the form of $\tilde{H}$ in (\ref{eq:H_rewritten}) so the complexity calculations using the Coulomb gas approach can proceed as before. We can achieve this with inverse Fourier transforms:
\begin{align}
    e^{-\frac{1}{2N}(b^2 + b_1^2)\Tr A[1]^2} &= \mathbb{E}_{M_D}e^{-i\sqrtsign{\kappa}\sqrtsign{b^2 + b_1^2}\Tr M_DA[1]}\\
     e^{-\frac{1}{2N}b^2\Tr A[2]^2} &= \mathbb{E}_{M_G}e^{-i\sqrtsign{\kappa'}b\Tr M_GA[2]}   
\end{align}
from which we obtain \begin{align}
&K_Le^{-Nk_D\kappa(1 + o(1)) I_1(\hat{x}_D; \sqrtsign{2})} e^{-Nk_G\kappa'(1 + o(1)) I_1(\hat{x}_G; \sqrtsign{2})}\mathbb{E}|\det \tilde{H}|\notag\\
\leq 
  & \mathbb{E}|\det \tilde{H}|\mathbbm{1}\left\{ i\left(\sqrtsign{\kappa}(1 + \mathcal{O}(N^{-1}))\sqrtsign{b^2 + b_1^2}M_D + \frac{x+x_1}{\sqrtsign{2}}\right) = k_D, ~ i\left(\sqrtsign{\kappa'}(1 + \mathcal{O}(N^{-1}))bM_G - \frac{x}{\sqrtsign{2}}\right) = k_G\right\} \\
  \leq & K_Ue^{-Nk_D\kappa(1 + o(1)) I_1(\hat{x}_D; \sqrtsign{2})} e^{-Nk_G\kappa'(1 + o(1)) I_1(\hat{x}_G; \sqrtsign{2})} \mathbb{E}|\det \tilde{H}|.
  \end{align}
It follows that \begin{align}
&K_N'\iint_B dx dx_1 e^{-(N-2)\left[ \Phi(x, x_1) + k_G\kappa' I_1(x; \sqrtsign{2\kappa'}b) + k_D\kappa I_1\left(( - (x + x_1); \sqrtsign{2\kappa(b^2 + b_1^2)}\right)\right](1 + o(1))} \notag\\
\lesssim
    &C_{N, k_D, k_G} \notag\\
    \lesssim &K_N' \iint_B dx dx_1 e^{-(N-2)\left[ \Phi(x, x_1) + k_G\kappa' I_1(x; \sqrtsign{2\kappa'}b) + k_D\kappa I_1\left(( - (x + x_1); \sqrtsign{2\kappa(b^2 + b_1^2)}\right)\right](1 + o(1))}.
\end{align}
So we see that the relevant exponent in this case is the same as for $C_N$ but with additional GOE eigenvalue large deviation terms, giving the complexity limit \begin{align}
     \lim \frac{1}{N} \log\mathbb{E}C_{N, k_D, k_G} 
=   \Theta_{k_D, k_G}(u_D, u_G) = K - \min_B \left\{\Phi +  k_G\kappa' I_1(x; \sqrtsign{2\kappa'}b) + k_D\kappa I_1\left( - (x + x_1); \sqrtsign{2\kappa(b^2 + b_1^2)}\right)\right\}.
\end{align}
\begin{figure}
    \centering
    \includegraphics[width=\textwidth]{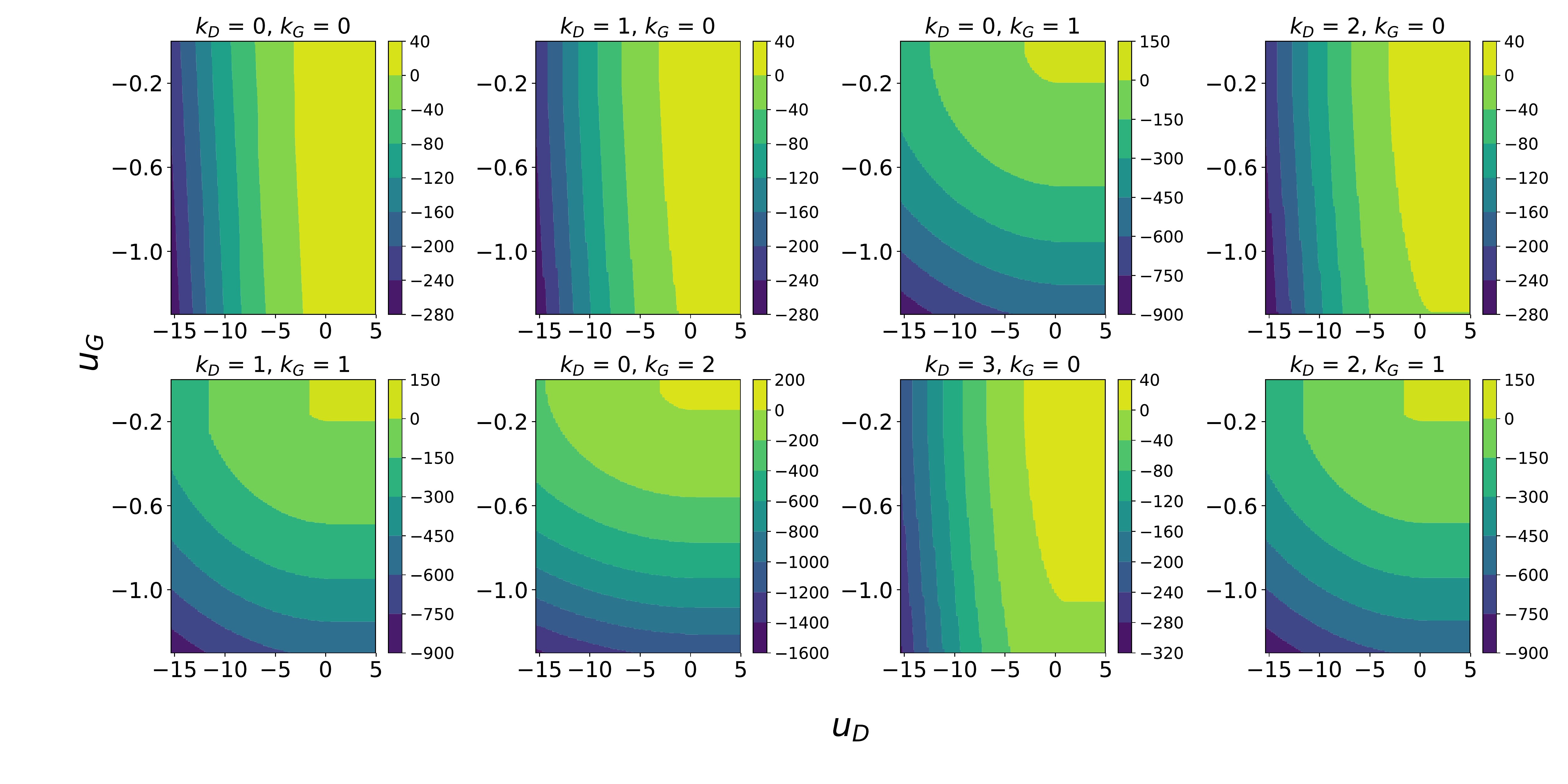}
    \caption{Contour plots of $\Theta_{k_D, k_G}$ for a few values of $k_D, k_G$. Here $p=q=3, \sigma_z=1, \kappa=0.9$.}
    \label{fig:theta_indexs}
\end{figure}
Plots of $\Theta_{k_D, k_G}$ for a few values of $k_D, k_G$ are shown in Figure \ref{fig:theta_indexs}.

\section{Implications}\label{sec:implications}
\subsection{Structure of low-index critical points}
We examine the fine structure of the low-index critical points for both spin glasses. \cite{choromanska2015loss} used the `banded structure' of low-index critical points to explain the effectiveness of gradient descent in large multi-layer perceptron neural networks. We undertake to uncover the analogous structure in our dual spin-glass model and thence offer explanations for GAN training dynamics with gradient descent. For a range of $(k_D, k_G)$ values, starting at $(0, 0)$, we compute $\Theta_{k_D, k_G}$ on an appropriate domain. In the $(u_D, u_G)$ plane, we then find the maximum $k_D$, and separately $k_G$, such that $\Theta_{k_D, k_G}(u_D, u_G)>0$. In the large $N$ limit, this procedure reveals the regions in the $(u_D, u_G)$ plane where critical points of each index of the two spin glasses are found. Figure \ref{fig:kd_kg_structure} plots these maximum $k_D, k_G$ values as contours on a shared $(u_D, u_G)$ plane. Figure \ref{fig:kd_kg_structure_filled} shows the same results, but on separate filled contour plots; the white regions in the plots clearly show the `ground state' boundary beyond which no critical points exist. We use some fixed values of the various parameters: $p=q=3, \sigma_z=1, \kappa=0.9$.

These plots reveal, unsurprisingly perhaps, that something resembling the banded structure of \cite{choromanska2015loss} is present, with the higher index critical points being limited to higher loss values for each network. The 2-dimensional analogues of the $E_{\infty}$ boundary of \cite{choromanska2015loss} are evident in the bunching of the $k_D, k_G$ contours at higher values. There is, however further structure not present in the single spin-glass multi-layer perceptron model. Consider the contour of $k_D = 0$ at the bottom of the full contour plot in Figure \ref{fig:kd_kg_structure}. Imagine traversing a path near this contour from right to left (decreasing $u_D$ values). At all points along such a path, the only critical points present are exact local minima for both networks, however the losses range over 
\begin{enumerate}[label=(\roman*)]
    \item low generator loss, high discriminator loss;
    \item some balance between generator and discriminator loss;
    \item high generator loss, low discriminator loss.
\end{enumerate}  These three states correspond qualitatively to known GAN phenomena:

\begin{enumerate}[label=(\roman*)]
    \item discriminator collapses to predicting `real' for all items;
    \item successfully trained model;
    \item generator collapses to producing garbage samples which the discriminator trivially identifies.
\end{enumerate}

\begin{figure}[h]
    \centering
    \includegraphics[width=\textwidth]{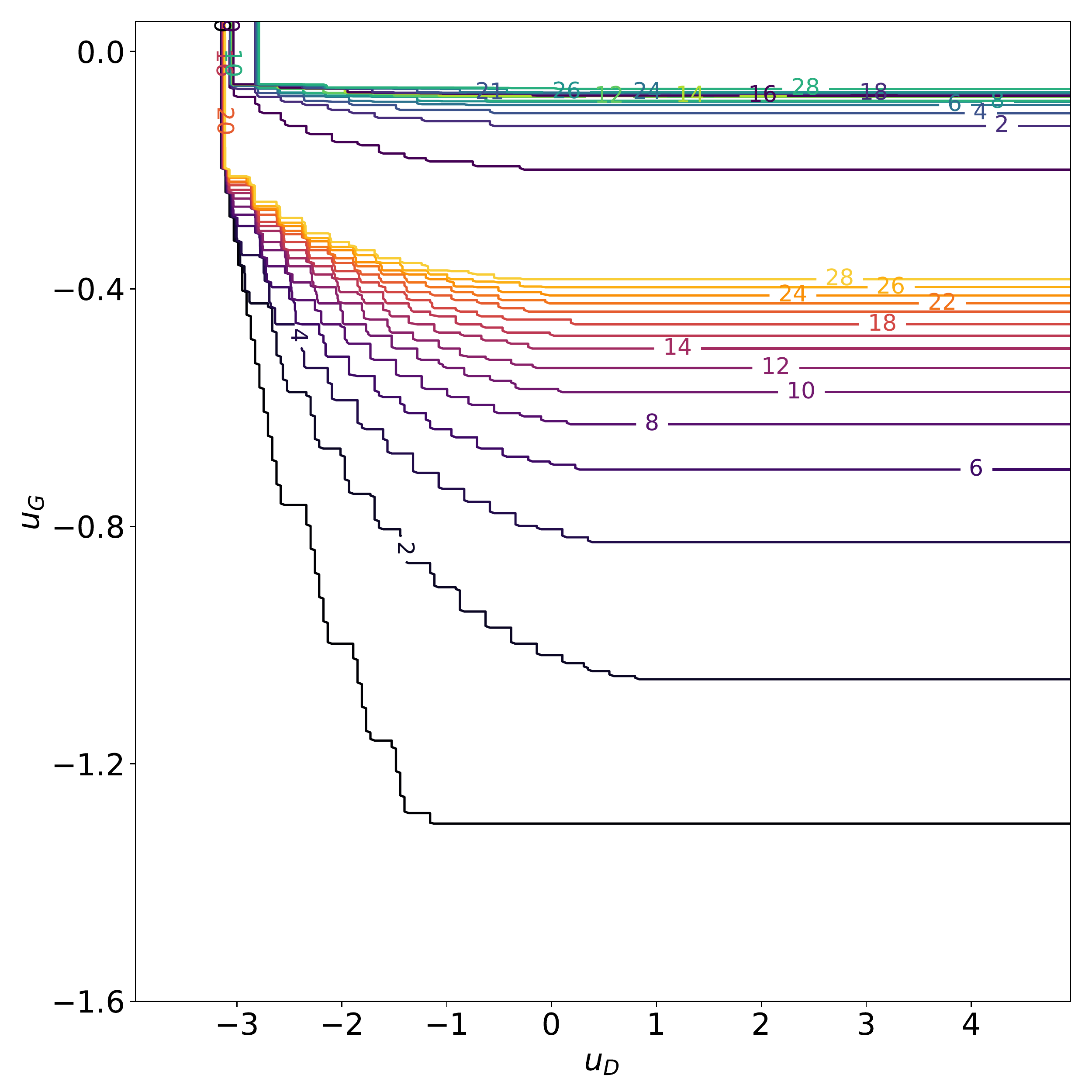}
    \caption{Contours in the $(u_D, u_G)$ plane of the maximum $k_D$ and $k_G$ such that $\Theta_{k_D, k_G}(u_D, u_G)>0$. $k_D$ results shown with a red colour red scheme, and $k_G$ with green. Only alternate contours are shown, in the interests of clarity. Here $p=q=3, \sigma_z=1, \kappa=0.9$.}
    \label{fig:kd_kg_structure}
\end{figure}
\begin{figure}[h]
    \centering
    \includegraphics[width=0.66\textwidth]{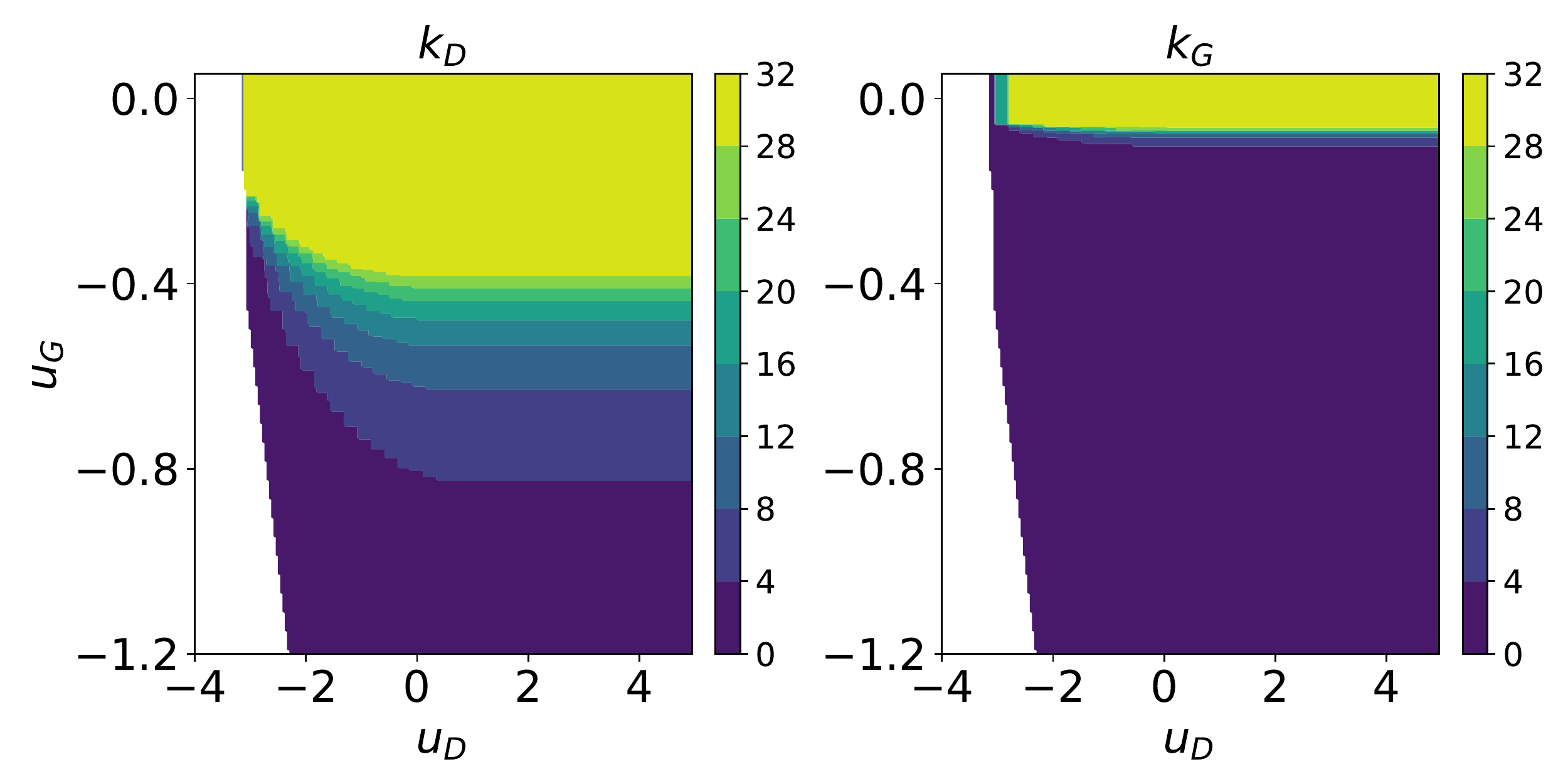}
    \caption{Contours in the $(u_D, u_G)$ plane of the maximum $k_D$ and $k_G$ such that $\Theta_{k_D, k_G}(u_D, u_G)>0$. $k_D$ results on the left, $k_G$ on the right.  Here $p=q=3, \sigma_z=1, \kappa=0.9$.}
    \label{fig:kd_kg_structure_filled}
\end{figure}

Overall, the analysis of our model reveals a loss surface that favours convergence to states of low loss for \emph{at least one of the networks}, but not necessarily both. Moreover, our plots of $\Theta$ and $\Theta_{k_D, k_G}$ in Figures \ref{fig:theta_latest}, \ref{fig:theta_indexs} demonstrate clearly the competition between the two networks, with the minimum attainable discriminator loss increasing as the generator loss decreases and vice-versa. We thus have a qualitative similarity between the minimax dynamics of real GANs and our model, including the existence of a Nash equilibrium, but also a new two-dimensional banded critical points structure. This structure offers the following explanation of large GAN training dynamics with gradient descent: \begin{enumerate}
    \item As with single feed-forward networks, the loss surface geometry encourages convergence to globally low values of at least one of the network losses.
    \item The same favourable geometry encourages convergence to successful states, where both networks achieve reasonably low loss, but also encourages convergence to failure states, where the generator's samples are too easily distinguished by the discriminator, or the discriminator has entirely failed thus providing no useful training signal to the generator.
\end{enumerate}

\subsection{Hyperparameter effects}
Our proposed model for GANs includes a few fixed hyperparameters that we expect to control features of the model, namely $\sigma_z$ and $\kappa$. Based on the results of \cite{auffinger2013random, choromanska2015loss, baskerville2020loss}, and the form of our analytical results above, we do not expect $p$ and $q$ (the number of layers in the discriminator and generator) to have any interesting effect beyond $p, q \geq 3$; this is clearly a limitation of the model. We would expect there to exist an optimal value of $\sigma_z$ that would result in minimum loss, in some sense. The effect of $\kappa$ is less clear, though we guess that, in the studied $N\rightarrow\infty$ limit, all $\kappa\in(0, 1)$ are effectively equivalent. Intuitively, $\kappa\in \{0, 1\}$ should result in the much larger network beating the smaller in the minimax game, however our results above are valid strictly for $\kappa\in (0,1)$.

\subsubsection{Effect of variance ratio}
In the definition of complexity, $u_D$ and $u_G$ are upper bounds on the loss of the discriminator and generator, respectively. We are interested in the region of the $u_D,u_G$ plane such that $\Theta(u_D, u_G)>0$, this being the region where gradient descent algorithms are expected to become trapped. We therefore investigate the minimum loss such that $\Theta > 0$, this being, for a given $\sigma_z$, the theoretical minimum loss attainable by the GAN. We consider two natural notions of loss:
\begin{enumerate}
    \item $\vartheta_D = \min\{u_D\in\mathbb{R} \mid \exists u_G\in\mathbb{R} ~:~ \Theta(u_D, u_G) > 0 \} $;
    \item $\vartheta_G = \min\{u_G\in\mathbb{R} \mid \exists u_D\in\mathbb{R} ~:~ \Theta(u_D, u_G) > 0 \} $.
\end{enumerate}
We vary $\sigma_z$ over a range of values in $(10^{-5}, 10^{2})$ and compute $\vartheta_D, \vartheta_G$.


To compare the theoretical predictions of the effect of $\sigma_z$ to real GANs, we perform a simple set of experiments. We use a DCGAN architecture \cite{radford2015unsupervised} with 5 layers in each network, using the reference PyTorch implementation from \cite{gan-code}, however we introduce the generator noise scale $\sigma_z$. For a given $\sigma_z$, we train the GANs for 10 epochs on CIFAR10 \cite{krizhevsky2009learning} and record the generator and discriminator losses. For each $\sigma_z$, we repeat the experiment 30 times and average the minimum attained generator and discriminator losses to account for random variations between runs with the same $\sigma_z$. We note that the sample variances of the loss were typically very high, despite the PyTorch random seed being fixed across all runs. We plot the sample means, smoothed with rolling averaging over a short window, in the interest of clearly visualising whatever trends are present. The results are shown in Figure \ref{fig:vary_sigma_results}.\\

There is a striking similarity between the generator plots, with a sharp decline between $\sigma_z=10^{-5}$ and around  $10^{-3}$, after which the minimum loss is approximately constant. The picture for the discriminator is less clear. Focusing on the sections $\sigma_z > 10^{-3}$, both plots show a clear minimum, at around $\sigma_z=10^{-1}$ in experiments and $\sigma_z=10^{-2}$ in theory. Note that the scales on the $y$-axes of these plots should not be considered meaningful. Though there is not precise correspondence between the discriminator curves, we claim that both theory and experiment tell the same qualitative story: increasing $\sigma_z$ to at least around $10^{-3}$ gives the lowest theoretical generator loss, and then further increasing to, tentatively, some value in $(10^{-2}, 10^{-1})$ gives the lowest possible discriminator loss at no detriment to the generator. 


\begin{figure}[h]
    \centering
    \begin{tabular}{cc}
    \subfloat[Generator]{\includegraphics[width=0.49\textwidth]{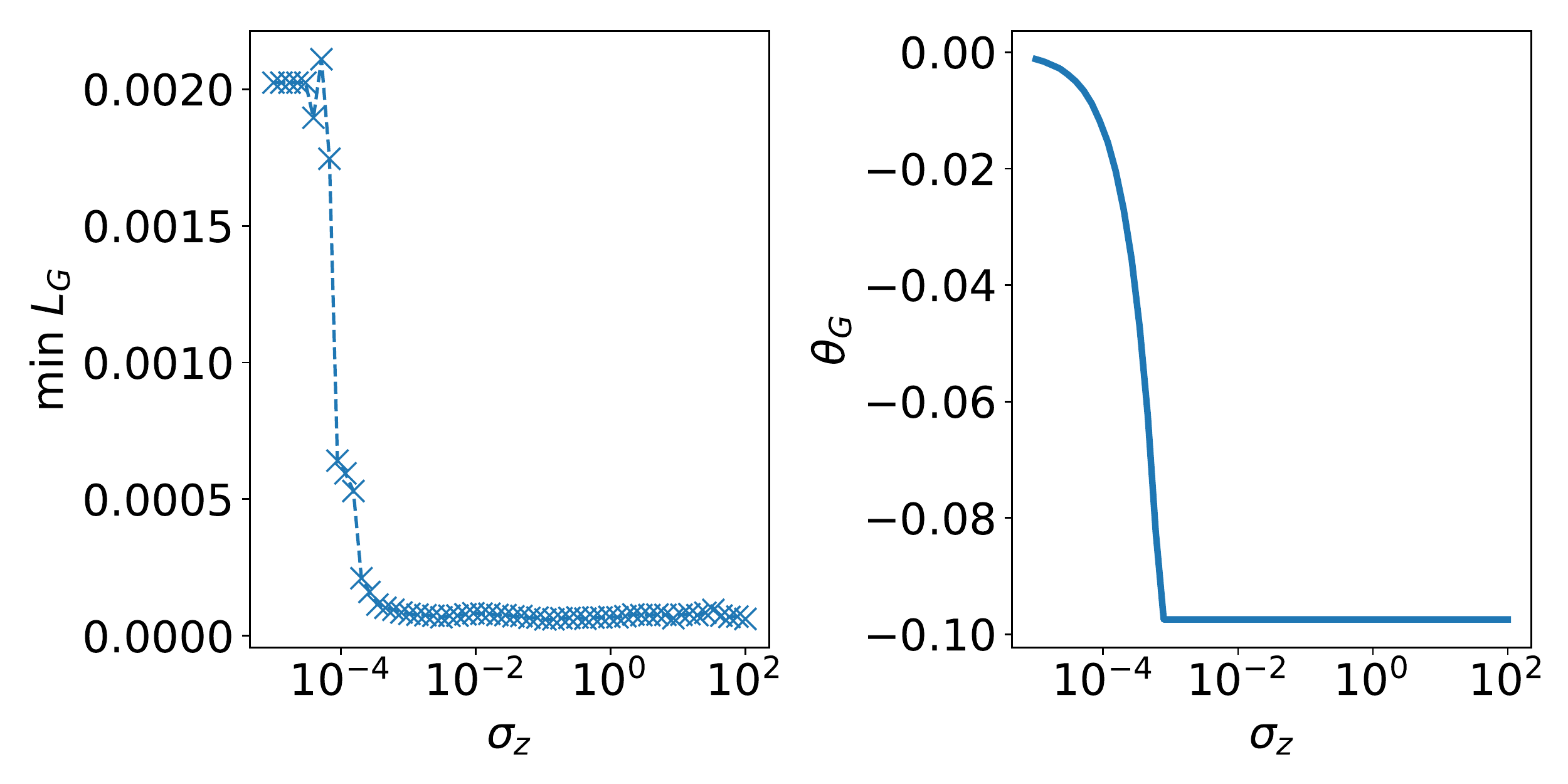}} 
    &     \subfloat[Discriminator]{\includegraphics[width=0.49\textwidth]{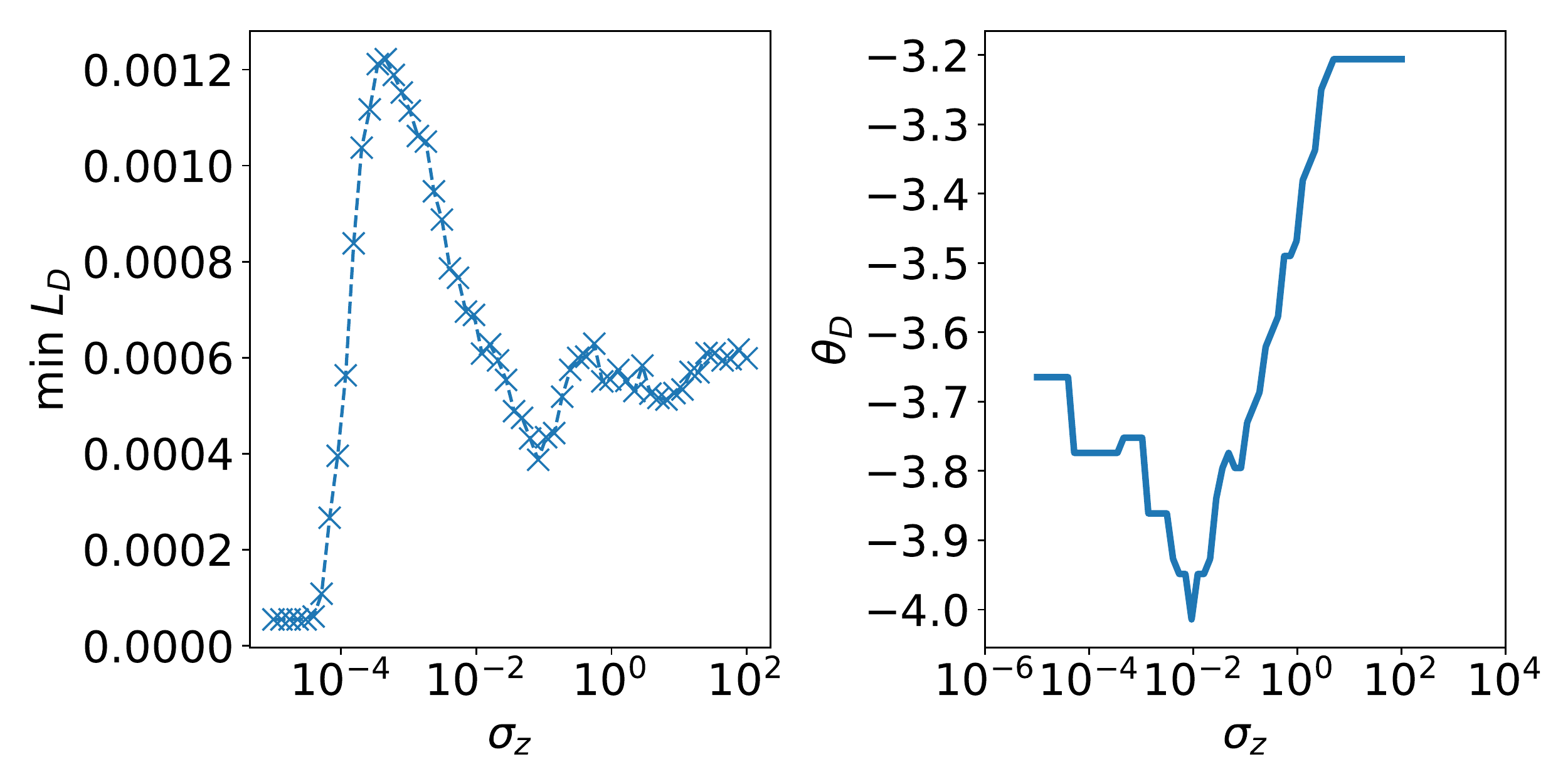}} 
    \end{tabular}
    \caption{The effect of $\sigma_z$. Comparison of theoretical predictions of minimum possible discriminator and generator losses to observed minimum losses when training DCGAN on CIFAR10. The left plots in each case (cross-dashed) show the experimental DCGAN results, and the right plots show the theoretical results $\theta_G, \theta_D$. $p=q=5$ and $\kappa=0.5$ are used in the theoretical calculations, to best match the DCGAN architecture. $\sigma_z$ is shown on a log-scale.}
    \label{fig:vary_sigma_results}
\end{figure}

\subsubsection{Effect of size ratio}
Similarly to the previous section, we can investigate the effect of $\kappa$ using $\vartheta_D, \vartheta_G$ while varying $\kappa$ over $(0,1)$. To achieve this variation in the DCGAN, we vary the number of convolutional filters in each network. The generator and discriminator are essentially mirror images of each other and the number of filters in each intermediate layer are defined as increasing functions\footnote{Number of filters in a layer is either proportional to $n_D$ or $n_D^2$ depending on the layer (and similarly with $n_G$).} of some positive integers $n_G, n_D$.  We fix $n_D + n_G=128$ and vary $n_D$ to obtain a range of $\kappa$ values, with $\kappa = \frac{n_d}{n_d + n_g}$. The results are shown in Figure \ref{fig:vary_kappa_results}.

The theoretical model predicts a a broad range of equivalently optimal $\kappa$ values centred on $\kappa=0.5$ from the perspective of the discriminator loss, and no effect of $\kappa$ on the generator loss. The experimental results similarly show a broad range of equivalently optimal $\kappa$ centred around $\kappa=0.5$, however there appear to be deficiencies in our model, particularly for higher $\kappa$ values. The results of the experiments are intuitively sensible: the generator loss deteriorates for $\kappa$ closer to 1, i.e. when the discriminator has very many more parameters than the generator, and vice-versa for small $\kappa$.

\begin{figure}[h]
    \centering
    \begin{tabular}{cc}
    \subfloat[Generator]{\includegraphics[width=0.49\textwidth]{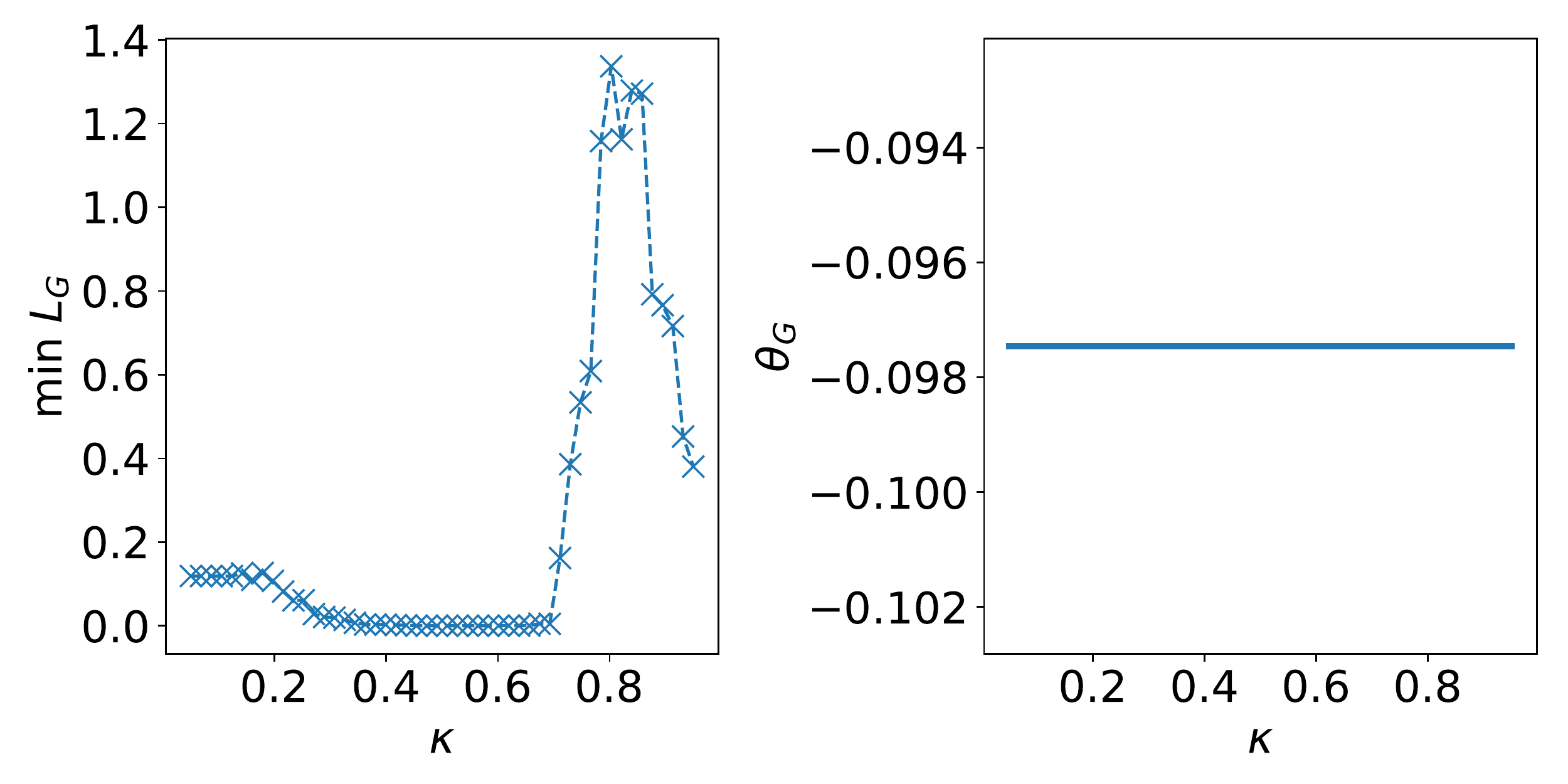}} 
    &     \subfloat[Discriminator]{\includegraphics[width=0.49\textwidth]{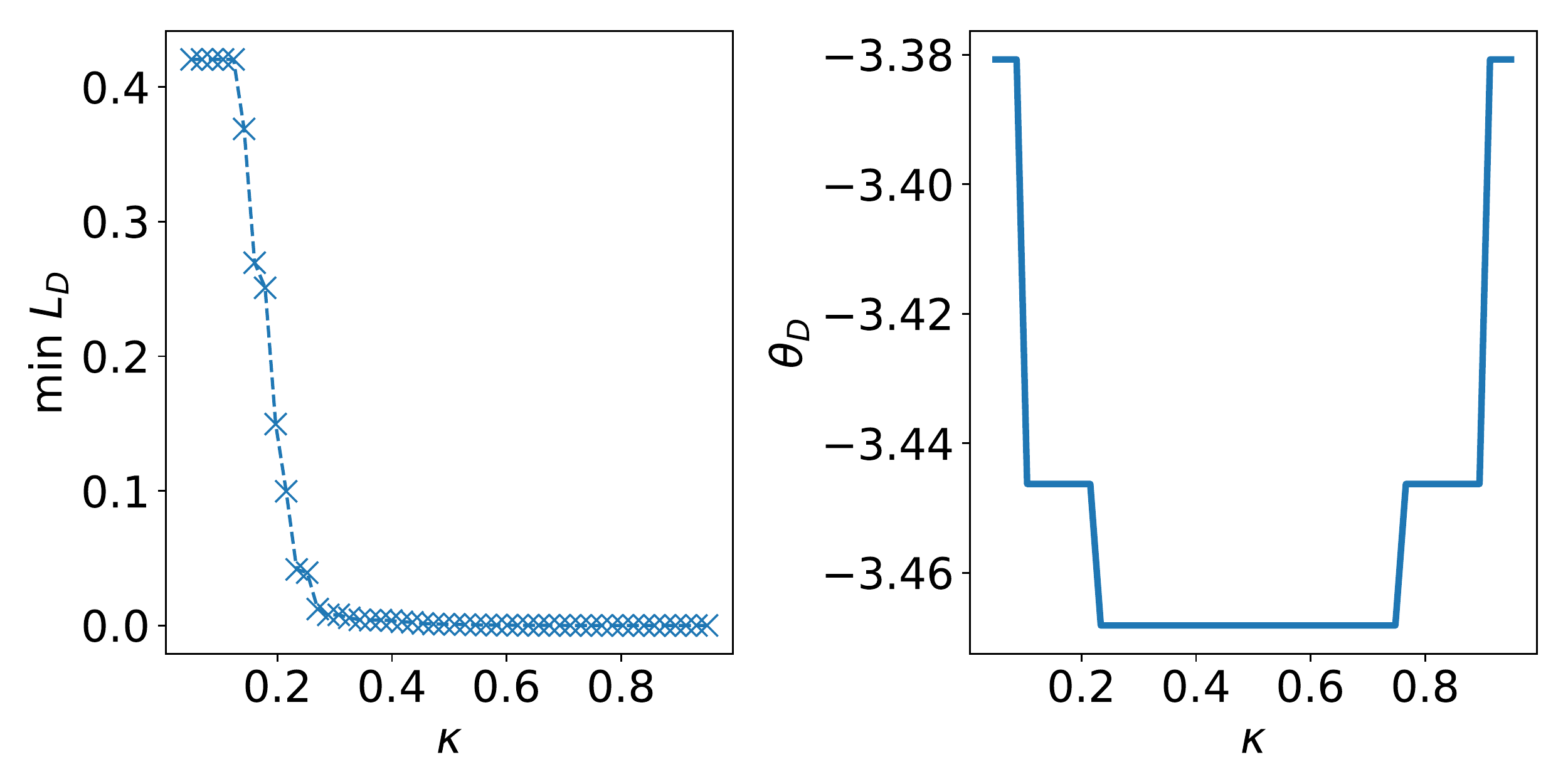}} 
    \end{tabular}
    \caption{The effect of $\kappa$. Comparison of theoretical predictions of minimum possible discriminator and generator losses to observed minimum losses when training DCGAN on CIFAR10. The left plots in each case (cross-dashed) show the experimental DCGAN results, and the right plots show the theoretical results $\vartheta_G, \vartheta_D$. $p=q=5$ and $\sigma_z=1$ are used in the theoretical calculations, to best match the DCGAN architecture.}
    \label{fig:vary_kappa_results}
\end{figure}

\section{Conclusions and outlook}\label{sec:conclusions}
We have contributed a novel model for the study of large neural network gradient descent dynamics with statistical physics techniques, namely an interacting spin-glass model for generative adversarial neural networks. We believe this is the first attempt in the literature to incorporate advanced architectural features of modern neural networks, beyond basic single network multi-layer perceptrons, into such statistical physics style models. We have conducted an asymptotic complexity analysis via Kac-Rice formulae and Random Matrix Theory calculations of the energy surface of this model, acting as a proxy for GAN training loss surfaces of large networks. Our analysis has revealed a banded critical point structure as seen previously for simpler models, explaining the surprising success of gradient descent in such complicated loss surfaces, but with added structural features that offer explanations for the greater difficulty of training GANs compared to single networks. We have used our model to study the effect of some elementary GAN hyper-parameters and compared with experiments training real GANs on a standard computer vision dataset. We believe that the interesting features of our model, and their correspondence with real GANs, are yet further compelling evidence for the role of statistical physics effects in deep learning and the value of studying such models as proxies for real deep learning models, and in particular the value of concocting more sophisticated models that reflect aspects of modern neural network design and practice. \\

From a mathematical perspective, we have extensively studied the limiting spectral density of a novel random matrix \ensemble using supersymmetric methods. In the preparation of this paper, we made considerable efforts to complete the average absolute value determinant calculations directly using a supersymmetric representation, as seen in \cite{baskerville2020loss}, however this was found to be analytically intractable (as expected), but also extremely troublesome numerically (essentially due to analytically intractable and highly complicated Riemann sheet structure in $\mathbb{C}^2$). We were able to sidestep these issues by instead using a Coulomb gas approximation, whose validity we have rigorously proved using a novel combination of concentration arguments and supersymmetric asymptotic expansions. We have verified with numerical simulations our derived mean spectral density for the relevant Random Matrix Theory  \ensemble and also the accuracy of the Coulomb gas approximation.\\

We hope that future work will be inspired to further study models of neural networks such as we have considered here. Practically, it would be exciting to explore the possibility of using our insights into GAN loss surfaces to devise algorithmic methods of avoiding training failure. Mathematically, the local spectral statistics of our random matrix \ensemble may be interesting to study, particularly around the cusp where the two disjoint components of the limiting spectral density merge.

\section{Acknowledgements}
FM is grateful for support from the University Research Fellowship of the University of Bristol. JPK is pleased to acknowledge support from a Royal Society Wolfson Research Merit Award and ERC Advanced Grant 740900 (LogCorRM). NPB is grateful to Diego Granziol for useful comments and discussions.

\printbibliography

\end{document}